\documentclass[11pt]{article}
\usepackage{float}
\usepackage{bm, commath}
\usepackage{natbib}
\usepackage{caption}
\usepackage{subcaption}
\usepackage{graphicx}
\usepackage{amsmath, amsfonts, amsthm}
\usepackage{float}
\usepackage{booktabs,siunitx}
\usepackage{mathabx}
\usepackage{url}
\usepackage{multirow}
\usepackage{amssymb}
\usepackage{blkarray}
\usepackage{bbm}
\usepackage{multicol}
\usepackage{authblk}
\usepackage[shortlabels]{enumitem}
\usepackage[flushleft]{threeparttable}
\usepackage{enumitem}
\usepackage{thmtools}
\usepackage{mathtools}

\newcommand{\albert}{\textcolor{black}}

\newcommand{\E}{\mathbb{E}}

\usepackage{bm, commath}
\usepackage{comment} 
\usepackage{mathrsfs}
\usepackage{appendix}  

\usepackage[left=1in,right=1in,bottom=0.9in,top=0.9in]{geometry}

\usepackage{algorithm}
\usepackage{algpseudocode}
\usepackage{mathtools} 
\usepackage{caption}
\usepackage{bbm}
\usepackage{textcomp}

\usepackage{authblk}

\usepackage{sectsty}

\newtheorem{theorem}{Theorem}

\newtheorem{remark}{Remark}

\newtheorem{lemma}{Lemma}

\usepackage[doublespacing]{setspace}
\usepackage[colorlinks=true,
            linkcolor=blue,
            citecolor=blue,
            urlcolor=blue]{hyperref}

\begin{document}

\def\spacingset#1{\renewcommand{\baselinestretch}%
{#1}\small\normalsize} \spacingset{1.7}


\sectionfont{\bfseries\large\sffamily}%

\subsectionfont{\bfseries\sffamily\normalsize}%

\title{Developing Performance-Guaranteed Biomarker Combination Rules with Integrated External Information under Practical Constraint}

\author[1]{Albert Osom\thanks{Email: \texttt{aosom@uw.edu}}}
\author[2]{Camden Lopez}
\author[3]{Ashley Alexander}
\author[4]{Suresh Chari}
\author[2]{Ziding Feng}
\author[2,1]{Ying-Qi Zhao}

\affil[1]{Department of Biostatistics, University of Washington, Seattle, WA}
\affil[2]{Public Health Sciences Division, Fred Hutchinson Cancer Center, Seattle, WA}
\affil[3]{Kelsey Research Foundation, Houston, TX}
\affil[4]{Department of Gastroenterology, University of Texas MD Anderson Cancer Center, Houston, TX}
\date{}
\maketitle
\vspace{-35pt}
\bigskip
\begin{abstract}
In clinical practice, there is significant interest in integrating novel biomarkers with existing clinical data to construct interpretable and robust decision rules. Motivated by the need to improve decision-making for early disease detection, we propose a framework for developing an optimal biomarker-based clinical decision rule that is both clinically meaningful and practically feasible. Specifically, our procedure constructs a linear decision rule designed to achieve optimal performance among class of linear rules by maximizing the true positive rate while adhering to a pre-specified positive predictive value constraint. Additionally, our method can adaptively incorporate individual risk information from external source to enhance performance when such information is beneficial. We establish the asymptotic properties of our proposed estimator and compare to the standard approach used in practice through extensive simulation studies. Results indicate that our approach offers strong finite-sample performance. We also apply the proposed methods to develop biomarker-based screening rules for  pancreatic ductal adenocarcinoma (PDAC) among new-onset diabetes (NOD) patients.
\end{abstract}

\section{Introduction} \label{intro}

Early detection is critical in the fight against cancer and other diseases, as it offers the potential to identify diseases at a stage where they are most treatable. The concept of early detection is grounded in the principle that earlier intervention following the detection of disease can lead to better outcomes, including higher survival rates, reduced treatment complexity, and improved quality of life for patients. 
An immense amount of research has been dedicated to identifying candidate biomarkers to aid in early detection. 
The primary challenge lies in effectively combining these biomarkers to develop a clinical decision rule that satisfies specific practical and clinical needs, while optimizing the detection of the target disease. 
In this regard, \cite{vickers2006decision} and \cite{kerr2016assessing} proposed net benefit and standardized net benefit as a measure of clinical utility of risk models, which balances the trade-off between the harm and benefit of the  model at a clinically relevant cost-benefit ratio.  
Recently, \cite{wang2020learning} developed a learning-based biomarker-assisted rule that optimizes clinical benefit under a pre-specified risk constraint, offering a flexible framework for developing both linear and nonlinear rules.  \cite{meisner2021combining} introduced a framework for linearly combining biomarkers that maximizes the true positive rate (TPR) for a fixed false positive rate (FPR).

Our work is motivated by early detection of pancreatic ductal adenocarcinoma (PDAC). PDAC has an overall 5-year survival rate of around $13\%$ \citep{siegel2025cancer} and has been projected to be the second highest cancer-related cause of death in the United States by 2030 \citep{rahib2014projecting}. 
Due to the low prevalence of PDAC, biomarker-based screening must target high-risk populations to be feasible \textemdash \ such as patients over 50 years with new-onset hyperglycemia and diabetes (NOD) \citep{chari2005probability}. Further, cost-effective screening tests are essential to selectively identify PDAC cases, as it is impractical to recommend all at-risk individuals for further workup that is more expensive or invasive. For instance, over one million Americans are diagnosed with new-onset diabetes each year, making broad screening in this population prohibitively expensive. \albert{ In the setting of population level screening, clinician are often interested in targeting a controllable work-up burden from recommending further diagnostic procedure, especially when such follow-up evaluations are costly, invasive, or limited by resource constraints. Positive predictive value (PPV) naturally serves as a clinically meaningful criterion for such need, because it directly quantifies the expected number needed to screen (NNS) to detect one cancer case. For example, in ovarian cancer screening, clinical guidelines recommend a minimum PPV of $10\%$ to justify invasive diagnostic surgery \citep{jacobs2004progress}. A rule achieving PPV of $10\%$ is expected to recommend 10 patients for further diagnostic work-up to identify one cancer case. This NNS interpretation is intuitive and compelling for clinicians when evaluating the feasibility of a screening modality. In contrast, existing biomarker-based decision rules that maximize TPR subject to an FPR constraint \citep{wang2020learning,meisner2021combining} are not well suited for this setting. FPR is a prevalence-free measure and does not directly determine the clinical work-up burden of a screening rule. It is not straightforward to translate an FPR-constrained optimization into a rule that satisfies a pre-specified PPV requirement. For a fixed disease prevalence, PPV is a nonlinear function of both TPR and FPR, and the mapping from PPV to FPR is not identifiable, as there exist infinitely many (TPR, FPR) pairs that achieve the same PPV.  Furthermore, in our setting, TPR is both the quantity being optimized and a component of the PPV constraint itself. Consequently, for a fixed PPV level, the admissible FPR is not unique but depends on the achieved TPR. Fixing an FPR threshold to approximate a target PPV while maximizing TPR can therefore yield a suboptimal rule relative to an approach that directly optimizes the clinical objective. See Table~\ref{illustration} for an illustrative numerical example.}

Additionally, when developing a new  biomarker-based decision rule, there are often published risk models/algorithms available that can provide valuable insights into patient risk profile.  For instance, the Enriching New-Onset Diabetes for Pancreatic Cancer (ENDPAC) model was developed and validated for PDAC screening in NOD patients, using patient characteristics such as weight change, blood glucose change, and age at diabetes onset \citep{sharma2018model}. Other examples include the MyProstateScore test for prostate cancer, endorsed by NCCN guidelines for prebiopsy risk stratification \citep{NCCNprostate}, and the Framingham Risk Score for coronary heart disease \citep{wilson1998prediction}. Integrating this information could enhance the discriminatory power of a new biomarker decision rule.

\begin{table}[t]
\caption{Comparison of the performance of the FPR-constrained method of \cite{meisner2021combining}, which maximizes TPR subject to an FPR constraint, with our proposed method (DOOLR), which directly maximizes TPR subject to a PPV constraint. To align the FPR constraint in \cite{meisner2021combining} with a given PPV target, we solve the optimization problem over a range of FPR constraints and select the value that yields the desired PPV level. Data is generated under a linear decision rule; see Section~\ref{cohort-linear}. We report average TPR and PPV with corresponding standard errors (in parentheses) in the test sample across 500 simulation replicates for PPV constraints $\alpha \in \{0.030, 0.040, 0.045\}$ and prevalence $p_1 = 0.01$. Across all values of $\alpha$, DOOLR achieves higher TPR than the FPR-constrained method of \cite{meisner2021combining}.} 
\label{illustration}
\centering
\scalebox{0.8}
{\begin{tabular}{ccccc}
\toprule
\multicolumn{1}{c}{$\alpha$} & \multicolumn{1}{c}{\textbf{Measure}}& \multicolumn{1}{c}{\textbf{Meisner et al. (2021)}}& \multicolumn{1}{c}{\textbf{DOOLR}}  \\
\midrule
0.030 & TPR & 0.870(0.177)  & 0.942(0.139)    \\
   & PPV &  0.030(0.001) & 0.035(0.008)   \\\\
0.040 & TPR &  0.818(0.198) & 0.926(0.120)    \\
   & PPV & 0.040(0.001)  &  0.044(0.007)  \\\\
0.045 & TPR &  0.796(0.187) & 0.907(0.153)    \\
   & PPV & 0.045(0.002)  &  0.048(0.013)  \\
\bottomrule
\end{tabular}}
\end{table}

In the context of risk prediction, various strategies have been proposed to leverage external information when developing a new risk model. Assuming there is a published risk prediction model available, we consider two established approaches discussed in the literature. The first estimates the parameters of the new risk prediction model using penalized likelihood, incorporating a Kullback–Leibler divergence penalty to enforce similarity between the new and the published model \citep{kullback1951information,hector2024turning,wang2021kullback,wang2023incorporating}. The second employs a constraint-based maximum likelihood estimation framework, restricting the extent to which the parameters of the new prediction model deviate from that of the published model to encourage alignment \citep{cheng2019informing,han2023integrating,chatterjee2016constrained}. Both approaches assume either there is access to an external data or correct specification of the published risk model. In contrast, our approach imposes no such assumptions, relying solely on the availability of external risk information \textemdash\ such as calculated risk scores under a published risk model, algorithm, or risk calculator \textemdash\ while maintaining flexibility in model specification. Moreover, as these existing methodologies primarily focus on risk prediction, they do not directly address our research objective, rendering them less suitable for our framework.

\albert{In this paper, we introduce a new criterion, termed PPV-constrained benefit function, from which we develop biomarker-based decision rules. Under this criterion, we learn decision rules that maximize the benefit of screening measured by TPR while enforcing clinical usefulness through a PPV constraint}. We derive the theoretically optimal rule, based upon which a plug-in rule could be formed using either parametric or flexible nonparametric estimators, such as random forests \citep{breiman2001random}, additive models \citep{hastie1990generalized}, or an ensemble of estimators implemented via the SuperLearner \citep{van2007super}. Considering the importance of interpretability in clinical practice, we further develop an approach using Direct Optimization for Optimal Linear decision Rule (DOOLR). In cases where the optimal decision rule is not linear, the proposed method ensures the development of a linear decision rule that delivers the best possible performance within the entire class of linear rules. Moreover, when external risk information, such as decisions from a published risk model or established biomarkers is available, we extend the linear decision rule to incorporate this auxiliary information, thereby improving predictive performance while preserving interpretability. We formalize this integration by introducing a penalty term into the objective function that adaptively regulates the extent to which external information is incorporated. Specifically, the penalty term quantifies the cost associated with misalignment between the decisions generated by the new biomarker rule and those recommended by the published risk model among patients with positive disease status. \albert{Importantly, our external information transfer approach does not require access to individual-level external risk scores; it only requires binary decisions (i.e., whether further screening is recommended) derived from an external risk model.}

The manuscript is organized as follows: we introduce the PPV-constrained benefit function function in Section \ref{methodology}. We propose algorithms for estimating the optimal decision rule that maximizes the PPV-constrained benefit function in Section \ref{estimation}. In Section \ref{asymptotic} , we discuss the asymptotic properties of the proposed estimator of the optimal linear rule. Simulation studies evaluating the proposed procedures are presented in Section \ref{simulation}. In Section \ref{application}, we illustrate our method to Pancreatic Cancer and a retrospective cohort of controls enriched for new onset Diabetes and pre-diabetes for Research Analyses (PANDORA) study. Our goal is develop biomarker-assisted decision rule for screening of PDAC among new-onset diabetes (NOD) patients. In Section \ref{discussion}, we provide a discussion. 

\section{ Methodology\label{methodology}}

\subsection{PPV-constrained benefit function}
Let $D$ denote a binary outcome, with $D = 1$ indicating presence of the disease of interest and $0$ otherwise. 
Let ${\bm X}$ be the set of biomarkers and other clinical characteristics used for deriving clinical decision rules. Denote by $\mathscr{R}=\{d(\bm X) : d(\bm X)\in \{0,1 \} \}$ the class of individualized clinical decision rules, such that $d(\bm X)=1$ means patient is recommended for further diagnostic procedure, and  $d(\bm X)=0$ otherwise. Let $\text{TPR}(d(\bm X))=\text{pr}(d(\bm X)=1|D=1)$, and $\text{PPV}(d(\bm X))=\text{pr}(D=1 | d(\bm X)=1)$. We are interested in constructing biomarker combination rules that maximizes TPR under a pre-specified clinically meaningful positive predictive value, $\alpha$. The problem can be formulated as  
\begin{align}
   \underset{d(\bm X) \in \mathscr{R}}{ \text{max }} \text{TPR}(d(\bm X)) \textit{ subject to } \text{ PPV}(d(\bm X)) \ge \alpha.  \label{eqn:1}
\end{align}
For $k \in \{0,1 \}$, let $p_k \coloneq \text{pr}(D = k)$, where $p_1$ denote the disease prevalence, and $\text{pr}(\bm X)$ and $\text{pr}(D = k|\bm X)$ denote the density function of $\bm X$ and the conditional probability of $D = k$ given $\bm X$, respectively. We assume $p_1$ is either known or can be estimated from independent data sources. We further assume that both $\text{pr}(D = 1|\bm X)$ and $\text{pr}(\bm X)$ are continuous functions of $\bm X$, and $\delta_0 < \text{pr}(D = k|\bm X) < 1 - \delta_0$ and $\delta_1 < p_k < 1 - \delta_1$ almost surely for some positive constants $\delta_0$ and $\delta_1$. For notational simplicity, let $\gamma=p_1/(1-p_1)$, $\eta_1(\bm X) = \text{pr}(D = 1|\bm X)/p_1$ and $\eta_0(\bm X) = \text{pr}(D = 0|\bm X)/p_0$, where $p_0=1-p_1$. We also denote $\mathbb{E}_{\bm X}$ as the expectation with respect to the marginal distribution of $\bm X$.
We can then write \eqref{eqn:1} as 
\begin{eqnarray}
      \underset{d(\bm X) \in \mathscr{R}}{ \text{max }} && \mathbb{E}_{\bm X}[\mathbbm{1}\{d(\bm X)=1\}\eta_1(\bm X)] \nonumber\\
    \textit{ subject to } && \frac{\gamma \mathbb{E}_{\bm X}[\mathbbm{1}\{d(\bm X)=1\}\eta_1(\bm X)]}{\gamma \mathbb{E}_{\bm X}[\mathbbm{1}\{d(\bm X)=1\}\eta_1(\bm X)] + \mathbb{E}_{\bm X}[\mathbbm{1}\{d(\bm X)=1\}\eta_0(\bm X)] } \ge \alpha. \label{eqn:2}
\end{eqnarray}
Our goal is to find the optimal decision rule $d^*(\bm X)$ that solves \eqref{eqn:2}. The Lagrangian corresponding to \eqref{eqn:2} is given as 
\begin{align*}
    L(d(\bm X),\lambda)=  \quad \E[\mathbbm{1}\{d(\bm X)=1\} \{\eta_{1}(\bm X)-\lambda\alpha\gamma\eta_1(\bm X) -\lambda\alpha\eta_0(\bm X)+\lambda\gamma\eta_1(\bm X) \}].
\end{align*}
\albert{Let $d^{*}(\bm X)$ denote the optimal rule that solves the optimization problem in \eqref{eqn:2}. We can obtain $d^{*}(\bm X)$ by maximizing $L(d(\bm X),\lambda)$ for a fixed $\lambda$ and choose the optimal $\lambda$ as the solution to  $\mathbb{E}_{\bm X}[\mathbbm{1}\{d(\bm X)=1\} (\gamma\eta_1(\bm X))-\alpha\gamma\eta_1(\bm X) -\alpha\eta_0(\bm X)]= 0$. The optimal rule is derived as  $d^{*}(\bm X)= \mathbbm{1} \{f_{\lambda^*} (\bm X) >0 \}$, where $f_{\lambda^*}(\bm X) = \eta_{1}(\bm X)-\lambda^*\alpha\gamma\eta_1(\bm X) -\lambda^*\alpha\eta_0(\bm X)+\lambda^*\gamma\eta_1(\bm X)$ for a pre-specified $\alpha$, and $\lambda^* >0$ solves 
$\mathbb{E}_{\bm X}[\mathbbm{1}\{f_{\lambda}(\bm X)>0\} (\alpha\gamma\eta_1(\bm X) +\alpha\eta_0(\bm X)-\gamma\eta_1(\bm X))]= 0.$ We show optimality of $d^{*}(\bm X)$ in Supplementary Appendix~A.}

\subsection{Estimation of the optimal decision rule}\label{estimation}

Suppose $\{D_i,\bm{X}_i\}_{i=1}^{n}$ data are collected from $n$ independent identically distributed subjects. We denote $n_1$ by the number of disease subjects and $n_0$ the number of non-disease subjects such that $n=n_0 + n_1$. Given the form of the theoretical optimal decision rule, a straightforward method for estimation is to replace the expectations in the theoretical optimal decision rule and the constraint with their empirical analogs. Let 
$\hat{\eta}_1(\bm X)=\hat{\text{pr}}(D=1|\bm X)/p_1$ and $\hat{\eta}_0(\bm X)=\hat{\text{pr}}(D=0|\bm X)/p_0$ be estimates of $\eta_1(\bm X)$ and $\eta_0(\bm X)$ respectively, where $\text{pr}(D=1|\bm X)$ can be estimated using either parametric (e.g., logistic regression) or data adaptive approaches (e.g., random forest, neural network, \texttt{SuperLearner}, etc).
Subsequently, the estimated optimal decision rule is 
$\hat{d}_{\hat{\lambda}}(\bm X)= \mathbbm{1} \{\hat{\eta}_{1}(\bm X)-\hat{\lambda}\alpha\gamma\hat{\eta}_1(\bm X) -\hat{\lambda}\alpha\hat{\eta}_0(\bm X)+\hat{\lambda}\gamma\hat{\eta}_1(\bm X) >0 \},$
where $\hat{\lambda}$ solves
\begin{align*}
    n^{-1}\sum_{i=1}^n[\mathbbm{1}\{\hat{d}_{\hat{\lambda}}(\bm X_i)=1\} (\alpha\gamma\hat{\eta}_1(\bm X_i) +\alpha\hat{\eta}_0(\bm X_i)-\gamma\hat{\eta}_1(\bm X_i))]= 0.
\end{align*}
We refer to the method discussed above as the plug-in approach. This method is flexible and can accommodate various estimation techniques (e.g., random forest, neural network, additive models, local polynomials, etc). 

\begin{remark}
    When estimating $\text{pr}(D=1|\bm{X})$ in the above plug-in rule using data under case-control sampling, we need to make adjustments to accurately estimate $\text{pr}(D=1|\bm{X})$. To be specific, let $S$ be the indicator of being included into the case-control sample. We can  obtain an estimate of $\text{pr}(D=1|\bm X,S)$ under the case-control data. However, we are typically more interested in estimating $\text{pr}(D=1|\bm X)$. In this case, we make an adjustment using the 
     relationship $$\frac{\hat{\text{pr}}(D=1|\bm X)}{\hat{\text{pr}}(D=0|\bm X)}= \frac{\hat{\text{pr}}(D=1|\bm X,S)}{\hat{\text{pr}}(D=0|\bm X,S)}\frac{n_0}{n_1} \frac{p_1}{1-p_1}$$ \citep{huang2010assessing}.  If we estimate $\text{pr}(D=1|\bm{X})$ using logistic regression, this adjustment simplifies to adding $\text{log}[{p_1n_0}/\{(1-p_1)n_1\}
 ]$ to the intercept of the estimated coefficient.
\end{remark}
In clinical practice, simpler decision rules are preferred for their ease of implementation and interpretability. A common choice is linear decision rules \textemdash\ this is usually obtained in two steps; first estimate a patient’s disease probability using logistic regression and apply a risk score cutoff to achieve the pre-specified clinical constraint. While intuitive, this method can underperform when model assumptions are incorrect and is often ad hoc, meaning it may not yield the optimal rule for a given performance measure. This challenge led us to develop the Direct Optimization for Optimal Linear Decision Rule (DOOLR) approach, which constructs a robust linear rule that is asymptotically optimal within the class of all linear rules.

\subsubsection{Direct optimization for optimal linear decision rule} \label{DOOLR}
Let $\mathscr{R}_L$ be the class of linear decision rules of the form $d(\bm X,\bm \beta)=\mathbbm{1}\{\mathscr{X}^T\bm{\beta} >0 \}$ where $\mathscr{X}=(1,\bm X)$ for $p$-dimensional feature variables $\bm{X}$ and $\bm \beta=(\beta_0,\bm \beta_1)$. Denote by $\mathscr{X}_{1i}$ and $\mathscr{X}_{0j}$ the covariates for diseased and non-diseased patients, respectively.  Let $d_L^*(\bm{X},\bm \beta^{*})=\mathbbm{1}\{\mathscr{X}^T\bm{\beta^{*}} >0 \}$ be the optimal linear rule that maximizes the objective in \eqref{eqn:1} among $\mathscr{R}_L$. The empirical analog to \eqref{eqn:1} using the observed data for the class of linear decision rules can be written as
\begin{eqnarray}
  &&\underset{\bm \beta \in\mathbb{R}^{p+1}}{ \text{max }}  n_1^{-1}\sum_{i=1}^{n_1}[\mathbbm{1}\{\mathscr{X}_{1i}^T\bm{\beta} >0\}] \nonumber\\
    \textit{ subject to } && \widehat{\text{PPV}}(\bm \beta)=\frac{\gamma n_1^{-1}\sum_{i=1}^{n_1}[\mathbbm{1}\{\mathscr{X}_{1i}^T\bm{\beta} >0\}]}{\gamma n^{-1}_1\sum_{i=1}^{n_1}[\mathbbm{1}\{\mathscr{X}_{1i}^T\bm{\beta} >0\}] + n^{-1}_0\sum_{j=1}^{n_0}[\mathbbm{1}\{\mathscr{X}_{0j}^T\bm{\beta} >0\}] } \ge \alpha. \label{eqn:3}
\end{eqnarray}
\albert{The optimization problem in \eqref{eqn:3} involving the zero–one indicator is known to be nondeterministic polynomial-time hard (NP-hard) \citep{natarajan1995sparse}. The presence of the indicator function introduces discontinuity into the objective and constraint functions, rendering gradient-based optimization methods infeasible. Our goal is not to circumvent this fundamental computational barrier, but to develop a practically tractable approximation that targets the optimal decision rule within the class of linear rules. To address this computational intractability, approximating the zero-one indicator with a smooth surrogate function is a widely used approach in the classification literature, demonstrating both computational and theoretical advantages \citep{bartlett2006convexity}. While commonly used convex surrogate losses (e.g., logistic or hinge loss), are computationally convenient, they do not in general preserve optimality within a restricted model class and may converge to suboptimal linear rules under model misspecification.}
\begin{figure}[t]
    \centering
    \includegraphics[width=12cm]{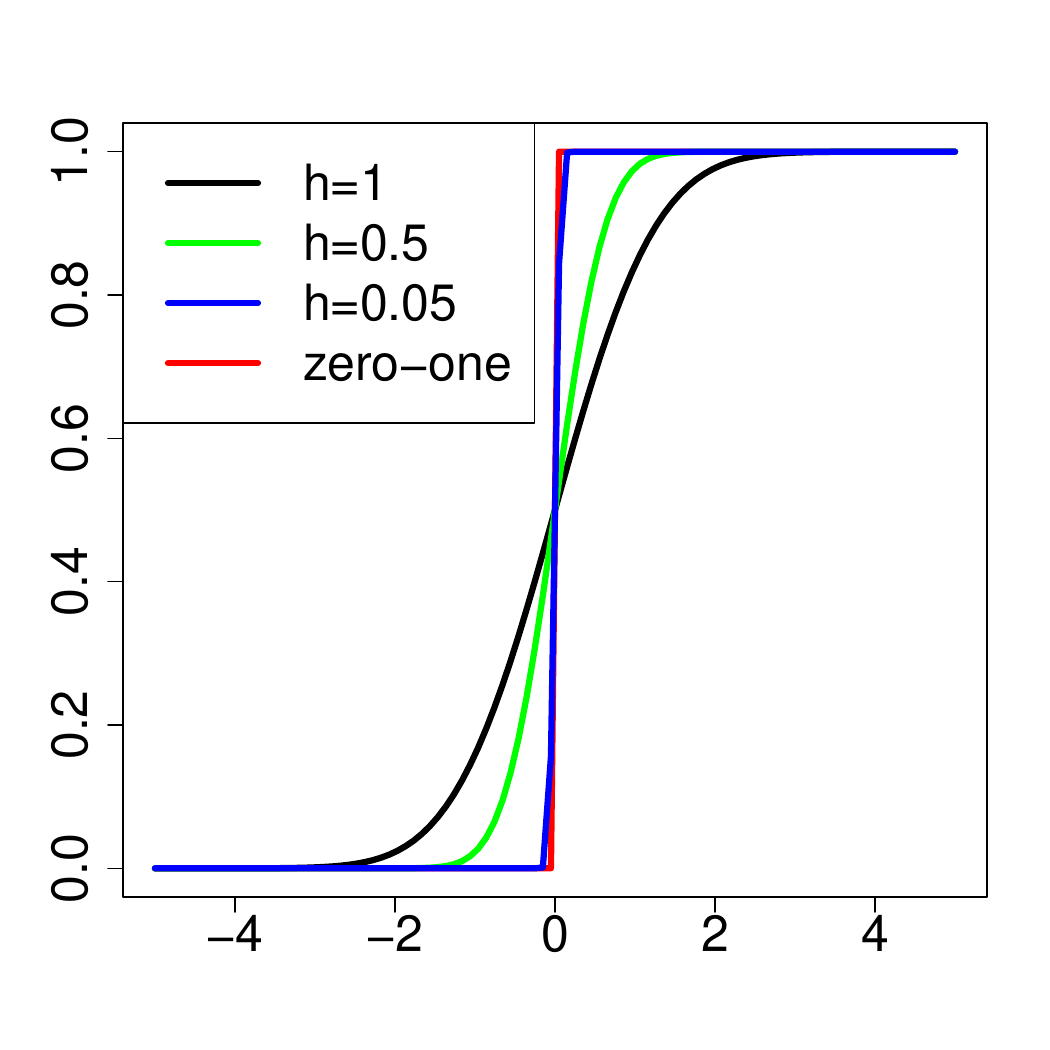}
    \caption{Comparison of zero-one indicator with $\Phi{(x/h)}$ for different values of $h$. We can observe that, when $h$ approaches zero, the approximation of the zero-one indicator with $\Phi{(x/h)}$ becomes more accurate. }
     \label{fig:1}
\end{figure}
To ensure we estimate the optimal linear decision rule with guaranteed performance, we also propose using a smooth approximation loss function that converges to the zero-one indicator. Specifically, we choose our surrogate function to be $\Phi(x/h)$, where $\Phi$ is the standard normal distribution function \citep{ma2007combining, fong2016combining, lin2011selection}.  As $h$ approaches zero, the surrogate function converges to the indicator function (see Figure \ref{fig:1}). \albert{The parameter $h$ in $\Phi(x/h)$ controls how closely the surrogate function approximates the indicator function and should be chosen in a data-adaptive manner. Ideally, $h$ would be selected through cross-validation to ensure that it yields the highest-performing rule. However, cross-validation is  computationally burdensome as each candidate value of $h$ requires solving 
a nonlinear optimization problem. For this reason, we adopt a practical choice
$ h = n^{-1/3}\,\mathrm{StdDev}\!\left(\mathcal{X}^{T}\hat{\bm \beta}^{0}/\lVert 
\hat{\bm \beta}^{0}\rVert\right)$, where $\hat{\bm \beta}^{0}$ is an initial estimate of the coefficients via say logistic regression. This choice of $h$ has demonstrated strong empirical performance in prior work \citep{lin2011selection,meisner2021combining}}. Replacing the indicator function in \eqref{eqn:3} with the proposed surrogate function, the relaxed optimization problem is given as
\begin{eqnarray}
  &&\underset{\bm \beta \in \mathbb{R}^{p+1}}{ \text{max }}  n_1^{-1}\sum_{i=1}^{n_{1}}\Phi\left(\frac{\mathscr{X}_{1i}^T\bm{\beta}}{h}\right)  \nonumber\\
   \textit{ subject to }   && \frac{\gamma n_1^{-1}\sum_{i=1}^{n_1}\Phi\left(\frac{\mathscr{X}_{1i}^T\bm{\beta}}{h}\right)}{\gamma n^{-1}_1\sum_{i=1}^{n_1}\Phi\left(\frac{\mathscr{X}_{1i}^T\bm{\beta}}{h}\right) + n^{-1}_0\sum_{j=1}^{n_0}\Phi\left(\frac{\mathscr{X}_{0j}^T\bm{\beta}}{h}\right) } \ge \alpha. \quad \label{eqn:4}
\end{eqnarray}
The solution to the relaxed problem in \eqref{eqn:4} $\hat{\bm \beta}^{\Phi_{h}}$, maximizes the Lagrangian 
\begin{align*}
L_{n}^{\Phi}(\bm \beta,\lambda) & =  n_1^{-1}\sum_{i=1}^{n_{1}}\Phi\left(\frac{\mathscr{X}_{1i}^T\bm{\beta}}{h}\right) + \hat{\lambda}^{\Phi}\gamma n_{1}^{-1}\sum_{i=1}^{n_1}\Phi\left(\frac{\mathscr{X}_{1i}^T\bm{\beta}}{h}\right) -\hat{\lambda}^{\Phi} \alpha\gamma n_1^{-1}\sum_{i=1}^{n_1}\Phi\left(\frac{\mathscr{X}_{1i}^T\bm{\beta}}{h}\right)\\
 &- \hat{\lambda}^{\Phi} \alpha n^{-1}_0\sum_{j=1}^{n_0}\Phi\left(\frac{\mathscr{X}_{0j}^T\bm{\beta}}{h}\right),  
\end{align*} 
where $\hat{\lambda}^{\Phi}$ solves $\widehat{\text{PPV}}(\mathbbm{1}\{\mathscr{X}^T\hat{\bm{\beta}}^{\Phi_{h}}>0\} )-\alpha=0$. To simplify notation, we write $\hat{\bm \beta}^{\Phi}$ to denote  $\hat{\bm \beta}^{\Phi_{h}}$ while keeping in mind that the estimator implicitly depends on $h$. \albert{This is a smooth objective function which we solve using an off-the-shelf non-linear optimization package in R (\texttt{nlm}). We then perform a grid search for the Lagrange multiplier to identify the value that satisfies the PPV constraint, $\widehat{\text{PPV}}(\mathbbm{1}\{\mathscr{X}^T\hat{\bm{\beta}}^{\Phi}>0\} )-\alpha=0$. Because the decision rule depends only on the sign of  $\mathscr{X}^T\hat{\bm{\beta}}$, we enforce the $l_2$ constraint such that $\|\hat{\bm{\beta}}^{\Phi} \|^{2}_{2}=1$, which does not affect the resulting decision rule. In Supplementary Appendix B, we summarize the full procedure in Algorithm 1 and  justification of the optimization steps.}

We denote the estimated linear decision function obtained by the above procedure as $\hat d_L(\bm X,
\hat{\bm{\beta}}^{\Phi})=\mathbbm{1}\{\mathscr{X}^T\hat{\bm{\beta}}^{\Phi}>0\}$. We will show in Section \ref{asymptotic} that $\hat d_L(\bm X,\hat{\bm \beta}^{\Phi})$ converges to the true best linear rule, $d_L^*(\bm X,
\bm \beta^*)$, asymptotically.

\subsubsection{Incorporating existing information to improve linear combination rule}
In many situations, there already exist risk models or scores developed to aid in patient risk stratification. If these existing risk information are useful, it would be beneficial to include them when learning the new biomarker combination rule. In this section, we extend the DOOLR approach to allow incorporation of useful external risk information.

\albert{Assume there exists an external risk algorithm or published model for which only the resulting binary decisions (i.e., recommend further screening or not) are available for all patients in the current dataset. Our framework does not require access to the original risk scores or model parameters; it only requires the external decisions. In some settings, the underlying risk scores may also be available. Let $\bm Z \subseteq \bm X$ denote the subset of features used to develop the external model, and let $r(\bm Z)$ denote the associated risk score with clinical threshold $\delta_0$, where larger values correspond to higher disease risk. For simplicity, we assume the external decision rule takes the threshold form $\mathbbm{1}\{r(\bm Z)-\delta_0>0\}$. For example, if the external model is logistic regression, then $r(\bm Z) = \bm Z^\top \widetilde{\bm\beta}$, where $\widetilde{\bm\beta}$ are published coefficients of the the model. More generally, $r(\bm Z)$ may arise from a machine learning or domain-informed algorithm. Our method, however, relies only on the induced binary decisions and remains applicable even when the underlying scores are unavailable.} 

\albert{To efficiently incorporate external risk information when developing the new biomarker combination rule, we propose a strategy that transfers external information only when it improves the performance metric. Since our goal is to maximize TPR, we introduce a penalty term in the objective function in \eqref{eqn:4} that quantifies disagreement between the decision rules among disease patients. The penalty term is}
\begin{align*}
   \frac{1}{n_1}\sum_{i=1}^{n_1} \mathbbm{1}\{ \mathscr{X}_{1i}^T\bm{\beta}\cdot (r(\bm{Z}_i)-\delta_0) <0 \}. 
\end{align*}
When the two rules align, both $\mathscr{X}_{1i}^T\bm{\beta}$ and $(r(\bm{Z}_i)-\delta_0)$ have the same sign. The penalty therefore imposes an additional cost when the new rule disagrees with the external rule on true disease cases. Additionally, we introduce a tuning parameter, $\eta$, to control the extent of information transfer from the external risk model. This ensures that if the external model is not informative, its influence on developing the biomarker combination rule remains minimal.  

We now propose a smooth Information Transfer (IT)-based penalized maximization to construct the linear decision rule, built upon \eqref{eqn:4}, we term this approach IT-DOOLR. The optimization problem is framed as
\begin{align}
   & \underset{\bm{\beta} \in \mathbb{R}^{p+1} }{ \text{max }}  n_1^{-1}\sum_{i=1}^{n_1} \Phi\left(\frac{\mathscr{X}_{1i}^T\bm{\beta}}{h}\right) - \eta   n_{1}^{-1}\sum_{i=1}^{n_1} \Phi \left(\frac{-\mathscr{X}_{1i}^T\bm{\beta}\cdot (r(\bm{Z}_i)-\delta_0) }{h}\right)\nonumber \\
   \textit{ subject to } &
\frac{\gamma n_1^{-1}\sum_{i=1}^{n_1}\Phi\left(\frac{\mathscr{X}_{1i}^T\bm{\beta}}{h}\right)}{\gamma n_{1}^{-1} \sum_{i=1}^{n_{1}}\Phi\left(\frac{\mathscr{X}_{1i}^T\bm{\beta}}{h}\right)+  n_0^{-1}\sum_{j=1}^{n_0}\Phi\left(\frac{\mathscr{X}_{0j}^T\bm{\beta}}{h}\right) } \ge \alpha. 
\end{align} 
Similarly, we can solve for $\hat{\bm \beta}_{\text{IT}}^{\Phi} =\underset{\bm \beta \in \mathbb{R}^{1+p}}{\text{argmax }}L_{\text{IT},n}^{\Phi}(\bm \beta,\lambda)$ using gradient-based methods, where  
\begin{align*}
L_{\text{IT},n}^{\Phi}(\bm \beta,\lambda)& =  n_1^{-1}\sum_{i=1}^{n_{1}}\Phi\left(\frac{\mathscr{X}_{1i}^T\bm{\beta}}{h}\right) + \lambda\gamma n_{1}^{-1}\sum_{i=1}^{n_1}\Phi\left(\frac{\mathscr{X}_{1i}^T\bm{\beta}}{h}\right)
- \lambda \alpha\gamma n_1^{-1}\sum_{i=1}^{n_1}\Phi\left(\frac{\mathscr{X}_{1i}^T\bm{\beta}}{h}\right) \nonumber\\
& -\lambda \alpha n^{-1}_0\sum_{j=1}^{n_0}\Phi\left(\frac{\mathscr{X}_{0j}^T\bm{\beta}}{h}\right) - \eta  n_{1}^{-1}\sum_{i=1}^{n_1} \Phi \left(\frac{-\mathscr{X}_{1i}^T\bm{\beta}\cdot (r(\bm{Z}_i)-\delta_0) }{h}\right),
\end{align*} 
and $\hat{\lambda}^{\Phi}_{\text{IT}}$ solves $\widehat{\text{PPV}}(\mathbbm{1}\{\mathscr{X}^T\hat{\bm{\beta}}^{\Phi}_{\text{IT}}>0\} )-\alpha=0$. 
We will select  $\eta$ through cross-validation to ensure efficient information transfer. The procedure will leverage external risk information when it enhances the TPR and maintains control over the  PPV by selecting larger values of $\eta$, while minimizing negative information transfer by selecting $\eta$ values closer to zero.

\section{ Theoretical properties\label{asymptotic}}

In this section, we discuss the asymptotic properties of the estimated linear rule using the proposed DOOLR approach. We present asymptotic guarantee that the best linear rule is achieved. We further show that the TPR under the estimated linear rule converges to the optimal TPR  in the linear class and PPV is controlled for the pre-specified $\alpha$. 
We assume the following standard conditions.
\begin{enumerate}
    \item[(A1)] Suppose $n_0/n, n_1/n \in (0,1)$  where $n=n_1+n_0 \rightarrow \infty$.
    \item[(A2)] For each $d\in \{0,1 \}$, observations $\bm X_{di}, i = 1, 2, … , n_d$, are independent and identically distributed $p$-dimensional random vectors with distribution function $F_d$.
   \albert{\item[(A3)] For each $d\in \{0,1\}$, the conditional distribution of 
$\bm X $ given $D=d$ is not supported on any proper linear subspace of 
$\mathbb{R}^{p}$; that is, for all proper linear subspaces 
$S \subset \mathbb{R}^{p}$, $\text{Pr}(\bm X \in S|D=d)<1$.}
    \item[(A4)] For each $d \in \{0, 1\}$, the distribution and quantile functions of $\bm X^T\bm \beta_1$ given $D = d$ are globally Lipschitz continuous uniformly over $\bm \beta_1 \in \mathbb{R}^{p}$ such that $||\bm \beta_1||=1$.
    \item[(A5)] The map $(\beta_0,\bm \beta_1) \mapsto \text{TPR}(\beta_0,\bm \beta_1)$ is globally Lipschitz continuous over $\Omega=\{\bm \beta_1 \in \mathbb{R}^{p}, \beta_0 \in \mathbb{R}: ||\bm \beta_1||=1\}$.
\end{enumerate}

 To summarize, Conditions (A1) and (A2) are standard assumptions in asymptotic analysis. Condition (A3) ensures $\bm X$ has full support in $\mathbb{R}^{p}$ and cannot be restricted to a lower-dimensional subspace for some $D$. Conditions (A4) and (A5) are less stringent smoothness conditions used to show asymptotic results. The smoothness condition ensures controlled variations in the functions.  The above conditions are similar to those stated in \cite{meisner2021combining}.

\begin{theorem}
     Under conditions (A1)–(A5), we have that as $h \rightarrow 0$  
     \begin{enumerate}[a)]
          \item $TPR(\hat{\bm \beta}^{\Phi}) \rightarrow TPR(\bm \beta^*)$ in probability,
          \item  $\text{lim inf }  PPV(\hat{\bm \beta}^{\Phi})\ge \alpha$ almost surely,
          \item $\hat{\bm \beta}^{\Phi} \rightarrow \bm \beta^*$ in probability.
 \end{enumerate}  \label {theorem1}
\end{theorem}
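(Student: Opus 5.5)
We will follow the argument of \cite{meisner2021combining}, adapted to a PPV constraint. Write $\text{TPR}(\bm\beta)=\text{pr}(\mathscr{X}^T\bm\beta>0\mid D=1)$ and $\text{FPR}(\bm\beta)=\text{pr}(\mathscr{X}^T\bm\beta>0\mid D=0)$. Note that $\text{PPV}(\bm\beta)\ge\alpha$ is equivalent to the linear inequality
$$g(\bm\beta):=(1-\alpha)\gamma\,\text{TPR}(\bm\beta)-\alpha\,\text{FPR}(\bm\beta)\ge 0 .$$
The first step is therefore to rewrite both the population problem and the smoothed problem \eqref{eqn:4} as maximizing a TPR-type functional subject to $g\ge 0$ (respectively its smoothed empirical version $\hat g_h$). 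We work on the compact parameter set $\Omega$ of the normalized coefficients, where $\|\bm\beta_1\|=1$ and $\beta_0$ is restricted to a compact interval. This restriction is allowed because, by (A4), the quantiles of $\bm X^T\bm\beta_1$ are uniformly controlled, so intercepts outside a bounded range give degenerate rules.

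The second step is uniform consistency of the smoothed empirical criteria. Define $\hat T_h(\bm\beta)=n_1^{-1}\sum_{i}\Phi(\mathscr{X}_{1i}^T\bm\beta/h)$ and $\hat F_h(\bm\beta)$ analogously for the controls. We show $\sup_{\Omega}|\hat T_h-\text{TPR}|\to 0$ and $\sup_{\Omega}|\hat F_h-\text{FPR}|\to 0$ in probability, and almost surely for the constraint part. We split each deviation into two terms:
\begin{itemize}
\item $\sup|\hat T_h-\hat T|$, where $\hat T$ is the empirical indicator version. This is bounded by the empirical mass of the band $\{|\mathscr{X}^T\bm\beta|<h^{1/2}\}$ plus $\Phi(-h^{-1/2})$.
\item $\sup|\hat T-\text{TPR}|$. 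Half-spaces form a VC class, so Glivenko--Cantelli gives this term almost surely.
\end{itemize}
The band mass converges uniformly to its population probability, again by the VC property. That population probability is $O(h^{1/2})$ uniformly, by the Lipschitz distribution functions in (A4). With (A1) and $h\to0$, both terms vanish. Consequently $\sup_\Omega|\hat g_h-g|\to0$ almost surely.

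The third step is an argmax theorem for constrained problems, and we expect it to be the main obstacle. The difficulty is that the feasible set $\{\hat g_h\ge 0\}$ is random, and we must show it converges to $\{g\ge0\}$ in a way that preserves the optimal value. Our plan rests on a regularity condition, which is implicit in (A3)--(A5): the population constraint is active at $\bm\beta^*$, and every neighbourhood of $\bm\beta^*$ contains points with $g>0$. This excludes degenerate optima where the constraint is satisfied only at isolated points; (A3) is used here so that moving $\beta_0$ strictly changes TPR and FPR. Given this condition, for any $\epsilon>0$ there is a $\bm\beta_\epsilon$ with $g(\bm\beta_\epsilon)>0$ and $\text{TPR}(\bm\beta_\epsilon)>\text{TPR}(\bm\beta^*)-\epsilon$. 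By uniform convergence, $\bm\beta_\epsilon$ is feasible for the smoothed problem with probability tending to one, so $\hat T_h(\hat{\bm\beta}^\Phi)\ge \text{TPR}(\bm\beta^*)-\epsilon-o_p(1)$.

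The remaining parts then follow in order.
\begin{itemize}
\item \textbf{Part (b).} Feasibility of $\hat{\bm\beta}^\Phi$ gives $\hat g_h(\hat{\bm\beta}^\Phi)\ge0$, hence $g(\hat{\bm\beta}^\Phi)\ge -o(1)$ almost surely. Since PPV is a continuous increasing function of $g$ when TPR is bounded away from zero, this yields $\liminf \text{PPV}(\hat{\bm\beta}^\Phi)\ge\alpha$.
\item \textbf{Upper bound for part (a).} Any limit point of $\hat{\bm\beta}^\Phi$ lies in the closed feasible set $\{g\ge0\}$. Together with uniform convergence this gives $\text{TPR}(\hat{\bm\beta}^\Phi)\le \text{TPR}(\bm\beta^*)+o_p(1)$.
\item \textbf{Lower bound for part (a).} Combining the bound from the third step with uniform convergence gives $\text{TPR}(\hat{\bm\beta}^\Phi)\ge\text{TPR}(\bm\beta^*)-\epsilon-o_p(1)$. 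Since $\epsilon$ is arbitrary, part (a) follows.
\item \textbf{Part (c).} We assume $\bm\beta^*$ is the unique (normalized) maximizer over the compact feasible set. The TPR map is continuous on $\Omega$ by (A5), so for every $\epsilon>0$ we have $\sup\{\text{TPR}(\bm\beta):\|\bm\beta-\bm\beta^*\|\ge\epsilon,\ g(\bm\beta)\ge -\delta\}<\text{TPR}(\bm\beta^*)$ for small $\delta$. Part (a) and the near-feasibility from part (b) then force $\hat{\bm\beta}^\Phi\to\bm\beta^*$ in probability.
\end{itemize}
The Lagrange-multiplier step $\hat\lambda^\Phi$ does not need separate analysis. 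Under this approach we treat $\hat{\bm\beta}^\Phi$ as the solution of the constrained problem, which the grid search over $\lambda$ implements.
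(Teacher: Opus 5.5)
Your proposal is correct in substance, apart from one mislabelled assumption discussed below, but it takes a different route from the paper.

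\textbf{The paper's route.} It cites Lemma 2 of \cite{meisner2021combining} for uniform convergence, whereas you re-derive it via the band bound and the VC property of half-spaces. It then argues as follows.
\begin{itemize}
\item[(a)] A triangle inequality whose last term is $|\text{TPR}(\hat{\bm\beta}^{\Phi})-\text{TPR}(\bm\beta^*)|$ itself, disposed of by continuous mapping. So (a) in effect presupposes (c).
\item[(c)] Theorem 5.7 of \cite{van2000asymptotic}, treating $\hat{\bm\beta}^{\Phi}$ as a near-maximizer of $\widehat{\text{TPR}}^{\Phi}$ over all of $\Omega$, with $\bm\beta^*$ assumed to be a well-separated maximizer of TPR. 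The PPV constraint and its random empirical feasible set never enter.
\item[(b)] Uniform convergence applied directly to the PPV ratio.
\end{itemize}

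\textbf{Your route.} You prove (a) first by matching bounds.
\begin{itemize}
\item The upper bound comes from near-feasibility, $g(\hat{\bm\beta}^{\Phi})\ge -o(1)$, together with compactness.
\item The lower bound comes from a strictly feasible approximant $\bm\beta_\epsilon$ that becomes feasible for the smoothed problem.
\end{itemize}
You obtain (b) from the identity $\text{PPV}-\alpha=g/(\gamma\,\text{TPR}+\text{FPR})$ with TPR bounded away from zero. This also handles the discontinuity of the ratio where TPR and FPR both vanish, which the paper's uniform-convergence step passes over. You then deduce (c) from (a), (b) and uniqueness on a compact set, where well-separation is automatic.

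\textbf{What each buys.} The paper's argument is shorter, but it only goes through if the constrained estimator may be treated as an unconstrained argmax. Yours actually engages the constraint and avoids the circularity. The price is two extra ingredients:
\begin{itemize}
\item compactness in $\beta_0$, which is defensible because globally Lipschitz quantile functions in (A4) force bounded support, so large $|\beta_0|$ give constant rules;
\item a local strict-feasibility condition near $\bm\beta^*$.
\end{itemize}

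\textbf{Correction.} The local strict-feasibility condition is \emph{not} implicit in (A3)--(A5), and (A3) does not deliver it. Moving $\beta_0$ changes TPR and FPR, but that says nothing about whether $g$ becomes positive nearby. Nothing in (A1)--(A5) stops $\alpha$ from equalling the largest PPV attainable by linear rules. In that case $\{g>0\}$ is empty, your $\bm\beta_\epsilon$ does not exist, and even feasibility of the smoothed empirical problem is in doubt.

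A merely global Slater point would not suffice either. With a nonconvex feasible set, $\bm\beta^*$ could be an isolated feasible point away from the strictly feasible region, so your lower-bound step genuinely needs the local version.

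State it explicitly as an added hypothesis: a local Slater-type constraint qualification. It is analogous to the strict-feasibility assumption the paper makes in Supplementary Appendix A for the unrestricted class, and to its added well-separation assumption for (c).
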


 The proof of the above theorem is based on  Lemma 2 from \cite{meisner2021combining} and M-estimation theory \citep{van2000asymptotic}. Our proof of Theorem~\ref{theorem1} utilizes Lemma 2 \citep{meisner2021combining}  to first show that the empirical smooth approximations of TPR and FPR  converge almost surely to the smooth versions of TPR and FPR respectively over a fixed parameter space. We then show that the smooth versions of TPR and FPR converge to TPR and FPR as $h\rightarrow 0$. We finally show c) by employing results from M-estimation theory (theorem 5.7 of \cite{van2000asymptotic}) in addition to the results in b). \albert{In showing consistency of $\hat{\bm \beta}^{\Phi}$ in c), we assume $\bm \beta^*$ is a unique and well-separated maximizer of $TPR(\bm \beta)$ over the parameter space}. Details of the proof can be found in the Supplementary Appendix D.

 The general implication of the above theorem is that, even in cases where the global optimal rule is not linear, we can guarantee performance of our proposed linear rule to be asymptotically close to the best linear rule within the class of linear rules.  As mentioned earlier, this is not guaranteed using convex surrogate functions.
 
\section{Numerical studies} \label{simulation}
In this section, we compare the finite sample performance of the proposed methods, namely the plug-in approaches, DOOLR, and IT-DOOLR against the standard logistic regression (`Standard') approach. The Standard approach is implemented in two steps; i) fit a logistic regression model for the outcome; ii) choose a threshold for the predicted probabilities that corresponds to the PPV constraint.
For the plug-in approach, we considered two ways of estimating the nuisance risk function, $\text{pr}(D=1|\bm X)$: 1) logistic regression (`Plug-in Logistic'), and 2) ensemble of methods (generalized additive models, logistic regression, random forest,  mean of outcome) via \texttt{SuperLearner} (`Plug-in SL'). \albert{For our proposed linear rules (DOOLR and IT-DOOLR), we set, across all simulations, $h = n^{-1/3}\,\mathrm{StdDev}\!(\mathcal{X}^{T}\hat{\bm \beta}^{0}/\lVert \hat{\bm \beta}^{0}\rVert),$
where $\hat{\bm \beta}^{0}$ denotes the coefficient estimate obtained from the Standard approach. We present a numerical study examining different values of $h$ to demonstrate the robustness of our decision rule in Supplementary Appendix~C (see Table~2). To mimic a rare disease screening scenario, we generate data with disease prevalence approximately $1\%$ across all simulation settings and repeat each simulation 500 times. We specify PPV constraints at $3\%$, $4\%$, and $4.5\%$ in all simulations. We present only the results for $4\%$ here and report the remaining results in Supplementary Appendix~C (see Tables~4-7).} Our implementation of the proposed methods and codes to replicate simulation results in R is available at \url{https://github.com/albertosom/BiomarkerRule}.

\begin{table}[!t]
\caption{Cohort study sampling simulation scenario when the true decision rule is linear, piece-wise linear, and nonlinear. Average TPR and PPV with corresponding standard error (in parentheses) in the test sample across 500 simulation replicates under PPV $(\alpha=0.04)$ and prevalence $(p_1=0.01)$.}
\label{tab:1}
\centering
\scalebox{0.8}
{\begin{tabular}{ccccccccc}
\toprule
\multicolumn{1}{c}{$n$} & \multicolumn{1}{c}{ } & \multicolumn{1}{c}{\textbf{Measure}}& \multicolumn{4}{c}{\textbf{Methods}} \\
\cmidrule(rl){4-7} 
\multicolumn{3}{c}{} &{Standard} & {Plug-in Logistic} & {Plug-in SL}  & {DOOLR} \\
\midrule
 & &  \textbf{Linear} &  &  &  & \\\\
2500 & & TPR &  0.988(0.006) & 0.987(0.010)  & 0.987(0.013)  & 0.967(0.040)  \\
   & & PPV & 0.042(0.003)  & 0.040(0.003)  &  0.038(0.012) & 0.047(0.015)  \\\\
5000 & & TPR & 0.990(0.004) & 0.990(0.006) &  0.990(0.006) & 0.980(0.017)
 \\
   & & PPV & 0.042(0.001)  & 0.040(0.001)  &  0.040(0.005) & 0.043(0.008)  \\\\
 & &  \textbf{Linear with contamination} &  &  &  & \\
2500 & & TPR &  0.508(0.284) & 0.547(0.281)  & 0.972(0.020)  & 0.922(0.123)  \\\\
   & & PPV & 0.043(0.025)  & 0.032(0.014)  &  0.040(0.003) & 0.045(0.008)  \\\\
5000 & & TPR & 0.571(0.284) & 0.651(0.295) &  0.977(0.011) & 0.959(0.042) \\
  &  & PPV &  0.05(0.022) & 0.037(0.008) & 0.040(0.002) & 0.043(0.003)\\\\
 & &  \textbf{Piece-wise linear} &  &  &  & \\\\
2500 & & TPR &  0.604(0.152) & 0.678(0.116)  & 0.771(0.159)  & 0.624(0.163)  \\
   & & PPV & 0.055(0.021)  & 0.043(0.011)  &  0.045(0.019) & 0.050(0.019)  \\\\
5000 & & TPR & 0.628(0.144) & 0.693(0.085) &  0.840(0.079) & 0.654(0.115) \\
  &  & PPV &  0.051(0.020) & 0.042(0.007) & 0.041(0.006) & 0.046(0.013)\\\\
  & &  \textbf{Nonlinear} &  &  &  & \\\\
2500 & & TPR &  0.696(0.256) & 0.833(0.201)  & 0.984(0.033)  & 0.747(0.218)  \\
   & & PPV & 0.047(0.010)  & 0.041(0.006)  &  0.040(0.004) & 0.042(0.009)  \\\\
5000 & & TPR & 0.601(0.296) & 0.868(0.169) &  0.989(0.007) & 0.808(0.189) \\
  &  & PPV &  0.048(0.009) & 0.041(0.004) & 0.040(0.001) & 0.043(0.006)\\
\bottomrule
\end{tabular}}
\end{table}

\subsection{Linear decision rule} \label{cohort-linear} 
We consider bivariate normal biomarkers with and without contamination, similar scenarios are described in \cite{ croux2003implementing} and \cite{meisner2021combining}. We first simulate the covariates $\textbf{X}=(X_1,X_2)$, each following a standard normal distribution. The outcome $D$ is generated from a Bernoulli distribution with success probability  $\text{pr}(D=1|X_1,X_2)=1/(1+ \text{exp}(\beta_0 +X_1\beta_1 + X_2\beta_2))$. The optimal decision rule has a linear form and is expressed as $d(\bm X)=\mathbbm{1} \{\beta_{0} + \beta_{1}X + \beta_{2}X_{2} + c >0 \} $, where $c$ is chosen to satisfy pre-specified PPV-contraint.  We set the parameters $\beta_0=-8.7$, $\beta_1=2.4$, and $\beta_2=2.4$ so that the disease prevalence is approximately $1\%$. We further added 6\% contaminated control ($D=0, X_1=6$  and $X_2=6$) observations to depict the scenario where patients with  biomarker values similar to those of cases are labeled controls. We estimate the decision rule using training samples of sizes $n_{train}= 2500$ and $5000$. The results were then evaluated on an independent test sample of size $n_{test}= 10^{5}$. 
 
 The simulation results are presented in Table \ref{tab:1}. When there are no contaminated observations, we observe that the Standard approach has the best performance. This is expected as the data is generated directly from a logistic model. The performance of the DOOLR approach improves as sample size increases. The two Plug-in approaches demonstrated comparable performance to the Standard approach. In the  presence of contaminated observations, the proposed methods provided better control of the PPV compared to the Standard approach. For the estimate of TPR, the Plug-in SL approach achieved the best performance, likely due to its flexibility in modeling the nuisance risk function. Also, the developed linear rule via the DOOLR method performed remarkably well, yielding results comparable to the Plug-in SL, \albert{while offering the added advantage of a more interpretable decision rule due to its linear structure}. This further illustrates its robustness in the presence of outlier observations compared to alternative methods.  
 
\subsection{Nonlinear decision rule} \label{nonlinear}
In this simulation setting, we consider two scenarios where the true clinical decision rule is not linear. The first scenario we consider is when the true rule is piecewise-linear. Suppose we have two independent biomarkers $(X_1, X_2)$, each generated from a standard normal distribution. The outcome $D$, is generated as $D= \mathbbm{1}\{ [(-8.9 +2X_1 + 2 X_2 *\mathbbm{1}\{X_2> X_2^{0.025}\}  + \varepsilon] >0 \}$ where 
$\varepsilon$ is distributed as a logistic random variable with location parameter zero and scale parameter one, and $X_2^{0.025}$ is the $0.025$th quantile of $X_2$. Among observations with $\mathbbm{1}\{X_2< X_2^{0.025}\}$, we randomly sample $0.4\%$ and set $D=1$. This setting  mimics a practical scenario where the true decision rule is piece-wise linear. For example, in PDAC screening, it has been shown that higher values of serum biomarker CA19-9 in addition to other patient characteristics (e.g., A1c, age, etc.) are generally associated with higher risk of PDAC \citep{ballehaninna2012clinical}. However, a small group of patients (e.g., $\text{Lewis}^{\alpha-\beta-}$ genotype) with very low values of CA19-9 are also at high risk of PDAC \citep{yong2022low}. 

For the second scenario, we consider when the true clinical decision rule is a nonlinear function of the covariates. Suppose we have three independent biomarkers $\bm{X}=(X_1, X_2,X_3)$, each generated from the standard normal distribution. The outcome $D$, is generated as $D= \mathbbm{1}\{ [(\beta_{0} +\beta_{1}\text{sin}(X_1) + \beta_{2}X_{2}^{2} + \beta_{3}\text{cos}(X_{3})+ \varepsilon] >0 \}$ where $\varepsilon$ is distributed as a logistic random variable with location parameter zero and scale parameter one. We set the parameters $\beta_0=-8.6$, $\beta_1=5$, $\beta_2=-4$, and $\beta_{3}=3$ so that the disease prevalence is approximately $1\%$. We train the decision rules on training samples of sizes $n_{train}= 2500, 5000$ and evaluate the performance on an independent test sample $n_{test}= 10^5$. We train our decision rules on training samples of sizes $n_{train}= 2500, 5000$ and evaluate the performance on an independent test sample $n_{test}= 10^5$. 

The results are presented in Table \ref{tab:1}. All proposed methods yield higher TPR estimates  compared to the standard logistic regression approach. Notably, the superior performance of DOOLR highlights its optimality among the class of linear rules. In terms of the PPV control, the proposed methods also demonstrate favorable results. Among all methods, the Plug-in SL approach has the best performance, due to its flexible estimation of the nuisance risk function. 

\subsection{Incorporating existing risk information}
In this simulation setting, we investigate the performance of IT-DOOLR under three scenarios where existing risk information is available. We generate data using the same data generation mechanisms under the nonlinear rule described in Section \ref{nonlinear}, and compare the performance of the IT-DOOLR approach, which incorporates the existing information, to the DOOLR method. We let the existing model be parameterized by $\widetilde{\beta}$ such that the decision rule based on the existing model is $\mathbbm{1}\{ (1,H(\bm X))^{T}\bm \widetilde{\beta} >0\}$, where $H(\bm X)=(\text{sin}(X_1), X_{2}^{2}, \text{cos}(X_{3}))$. We train the decision rules on training datasets with sample sizes of 2500 and 5000, and evaluate the performance on a testing dataset of size of $10^{6}$.  We consider the following three scenarios
\begin{itemize}
    \item Scenario I: the existing risk model is the same as the true model under the data generation $\bm{\widetilde{\beta}} = \bm{\beta}=(-8.6,5,-4,3)$
    \item Scenario II: the existing risk model does not include all covariates under the true data generation mechanism, here $\bm{\beta}=(-8.6,5,-4,3)$ and $\bm{\widetilde{\beta}}=(-8.6,5,0,0)$ with two covariates omitted in the existing model.
    \item Scenario III: the existing risk model is very different from the model under the true data generation mechanism, here $\bm{\beta}=(-8.6,5,-4,3)$ and $\bm{\widetilde{\beta}}=(-15,-3,-1, -1)$.
\end{itemize}

We generate outcome for all three scenarios using the same data generation mechanism described in Section~\ref{nonlinear}. The results are summarized in Table~\ref{tab:2}. In Scenario I, where the existing model aligns with the true data-generating process, IT-DOOLR effectively leverages the external risk information to improve upon the performance of DOOLR. In Scenario III, where the existing risk model deviates from the true data-generating mechanism, IT-DOOLR still achieves a modest improvement in TPR estimation. Overall, the proposed integration approach effectively utilize valuable external information (Scenarios I \& II) while avoiding negative information transfer (Scenario III) when the existing model is potentially misspecified. In particular, IT-DOOLR exhibits greater stability and improved estimation of both TPR and PPV compared to DOOLR, especially in scenarios where the external information is informative.

\begin{table}[!t]
\caption{Simulation scenarios under incorporating auxiliary information. Average TPR and PPV with corresponding standard error (in parentheses) in the test sample across 500 simulation replicates under PPV $(\alpha=0.04)$ and prevalence $(p_1=0.01)$.}
\label{tab:2}
\centering
\scalebox{0.8}
{\begin{tabular}{ccccccc}
\toprule
\multicolumn{1}{c}{$n$} & \multicolumn{1}{c}{\textbf{Measure}}& \multicolumn{1}{c}{\textbf{DOOLR}}& \multicolumn{3}{c}{\textbf{IT-DOOLR}} \\
\cmidrule(rl){4-6} 
\multicolumn{3}{c}{}  & {Scenario I} & {Scenario II}  & {Scenario III}  \\
\midrule
2500 & TPR & 0.747(0.218)  & 0.839(0.136)  & 0.802(0.165)  & 0.781(0.179)  \\\\
   & PPV &  0.043(0.009) & 0.041(0.006)  &  0.041(0.007)& 0.043(0.007)  \\\\
5000 & TPR &  0.808(0.189) & 0.880(0.097)  & 0.860(0.118) & 0.845(0.145)   \\\\
   & PPV & 0.043(0.006)  &  0.041(0.004)  & 0.042(0.005)  & 0.042(0.006) \\\\
\bottomrule
\end{tabular}}
\end{table}
We additionally consider a scenario where the data is obtained from a nested case-control sampling design, commonly considered in biomarker studies \citep{ernster1994nested}. The performance among the methods is similar to the performance under the simulation scenario in Section \ref{cohort-linear}. The data generation mechanism and results are summarized in Supplementary Appendix~C Table~1.

\section{Application to PANDORA study data}\label{application}
Early detection of PDAC can greatly improve patients' survival. Unfortunately, PDAC  lacks an effective strategy for early detection due to several challenges. Notably, the low incidence of PDAC in the general population (10/100,000) makes it cost-prohibitive to implement a biomarker-based screening program for asymptomatic individuals. Even if a biomarker test has high sensitivity and specificity, it may still cause thousands of false positives for every identified positive cancer. Therefore, biomarker-based screening should be applied in a population that has a higher risk for PDAC. Particularly, new Onset Hyperglycemia and Diabetes (NOD) patients have shown a higher risk for PDAC \citep{chari2005probability}. 

Leveraging data from Kelsey-Seybold Clinic (Houston, Texas) during the period 2011- 2022, the  PANDORA  Cohort includes a total of 357,593 patients with data on either their glycated haemoglobin (HbA1c) or fasting blood glucose (FBG) level. Out of this number, we obtained the NOD population by selecting patients based on the first occurrence of any of the following; a) HbA1c $\ge 6.5\%$, b) two consecutive FBG $\ge 126 mg/dl$ within 18 months of each other, 
or c)	FBG $\ge 126 mg/dl$ followed by the start of diabetes medication within 30 days. Our total NOD population consists of 18,300 patients, of whom 35 developed PDAC within three years. Among the 35 confirmed PDAC cases, 32 have complete information on weight, and estimated glucose levels at both NOD date and left window (approximately 1 year prior to the NOD date). Among the 18265 controls, 11,149 have complete three years follow-up of which 7,137 have complete  biomarker information at both NOD date and left window.

The goal of our analysis is to develop a new biomarker combination rule that incorporates broader covariate information to identify NOD patients at high risk of developing PDAC and recommend them for further screening. Particularly, the rule should be developed under a practical constraint on the PPV.  \cite{sharma2018model} proposed an algorithm (the ENDPAC model) to stratify NOD patients into high or low PDAC risk based on changes in FBG levels, weight, and age at NOD onset. While the model demonstrated promising performance in their validation cohort, efforts to validate it in other populations have shown lower performance \citep{chen2021validation,boursi2022validation}. Several factors may contribute to this reduced performance: (1) the original ENDPAC model was developed primarily in a predominantly white population, and (2) the model relies on FBG measurements, which are often unavailable for many patients. To address the latter limitation, the authors proposed a heuristic conversion between HbA1c and FBG, given by $(\text{FBG}= 28.7\times \text{HbA1c}-46.7)$, for patients without FBG values. However, this conversion is suboptimal, as it fails to fully capture the nuances of FBG and HbA1c measurements. Specifically, FBG reflects glucose control at the time of the blood test and exhibits higher day-to-day variability, whereas HbA1c represents an average blood glucose level over the past 2–3 months and thus has lower day-to-day variability. We incorporate risk information from the ENDPAC model to assess whether it improves the performance of our biomarker combination rule. 

Our analysis data consist of the 32 diagnosed PDAC cases and  640 random sample of controls from a total NOD cohort of size 18,300. For each case, we randomly sample 20 controls from individuals in the cohort who have not developed the outcome by the time the case occurred, this depicts a nested case-control sampling design scenario. We construct the rule based on patients' change in HbA1c, change in weight and age at NOD. We standardized these variables before model fitting. To assess the performance of our method, we randomly split the data equally into training and testing sets. We develop the proposed methods on the training data and evaluated their performance on the independent test set. This random split was repeated 200 times, and the average of the estimates on the test sets is reported. For our analysis, we use a prevalence estimate of PDAC in the NOD population of $p_1=0.44\%$, and pre-specified PPV values of $1\%$, $1.5\%$, and $2\%$, corresponding to screening approximately 100, 67, and 50 individuals to detect one cancer, respectively. From the study data, we  calculate ENDPAC score for each patient and identify risk thresholds to define high- and low-risk groups that satisfy the PPV constraints. 
 
\begin{table}[!t]
\caption{Application to screening NOD patients at high risk of PDAC data given $2\%$ constraint on PPV using an estimate of prevalence of pancreatic cancer to be $0.44\%$. We report average TPR, PPV, and estimate of coefficients with corresponding standard error (in parentheses). The linear decision rule from IT-DOOLR is $\mathbbm{1}\{0.453 + 0.208\times  \mathrm{Age}  - 0.677 \times \mathrm{Weight change} + 0.672 \times \mathrm{Change in HbA1c} >0\}$.}         
\label{tab:3}
\centering
\scalebox{0.8}
{\begin{tabular}{ccccccc}
\toprule
\multicolumn{1}{c}{\textbf{ }} & \multicolumn{1}{c}{}& \multicolumn{4}{c}{\textbf{Methods}} \\
\cmidrule(rl){3-7} 
\multicolumn{2}{c}{} & {Standard} & {Plug-in Logistic} & {Plug-in SL}  & {DOOLR} & {IT-DOOLR}  \\
\midrule
 
     &  &  &    $\mathbf{\alpha=2\%}$&   &  & \\
\textbf{Measure} & 
    TPR & 0.682(0.114) & 0.685(0.125) & 0.760(0.101) & 0.755(0.124) & 0.771(0.122)  \\\\
   & PPV & 0.023(0.010) & 0.024(0.010) & 0.020(0.010) & 0.020(0.007) & 0.020(0.006) \\\\
\textbf{Predictors} &  &  &  &   &  & \\
   & Age &  0.444(0.113) &  &  & 0.204(0.098) & 0.208(0.097)\\
& Weight change &  -0.805(0.086) &  & & -0.6828(0.127)& -0.677(0.122)\\
    & Change in HbA1c &  0.355(0.095) &  &  & 0.678(0.147) &  0.672(0.149) \\

\bottomrule
\end{tabular}}
\end{table}
We present the results of our analysis based on the proposed methods in Table \ref{tab:3}. We considered the PPV constraint at 2\%, to illustrate the proposed method.  The reported values are the average estimates (with standard deviations) of TPR and PPV across the test sets. To achieve a PPV constraint of $2\%$,  the threshold for the ENDPAC score is set at 2, and the corresponding TPR of the ENDPAC score based rule is 68.8\%. The Plug-in Logistic perform similarly to the Standard approach. The IT-DOOLR estimate of TPR is $77.1\%$, slightly higher than the TPR estimate for DOOLR $(75.5\%)$, indicating some information transfer from the ENDPAC model. Both DOOLR and IT-DOOLR yield PPV estimates that closely align with the pre-specified value of $2\%$. 

Finally, Table \ref{tab:3} also presents  the estimated coefficients of the predictors.  These coefficients can be used to construct linear decision rules (DOOLR and IT-DOOLR) by comparing linear combinations of covariates and coefficients to thresholds of -0.472 and -0.453, respectively. A reduction in weight between the NOD date and the left window is associated with a higher risk of PDAC. This aligns with the current research findings, which suggest that among individuals with new-onset diabetes, unexplained weight loss may  indicate underlying pancreatic cancer rather than just the metabolic changes from diabetes itself. Result for the other PPV values are reported in Supplementary Appendix~C Table~2.

\section{Discussion}\label{discussion}

This paper proposes a new PPV-constrained benefit function for developing individualized clinical decision rules, designed to maximize TPR while ensuring adherence to a clinically meaningful pre-specified PPV. We further extend this framework to integrate relevant external risk information from published literature. Our proposed linear and nonlinear rules offer practitioners the flexibility to choose between interpretability and predictive power. When interpretability is prioritized, we show that DOOLR is guaranteed to estimate the best rule within the class of linear rules. On the other hand, the plug-in approach allows a more flexible estimation of the nonlinear rule, serving as an attractive alternative when interpretability is less of a concern and predictive accuracy is more important. The strength of our framework is that both of our approaches are model-free which can be robust when the risk model is mis-specified.

A key distinction between our proposed objective and prior methods that maximize TPR under an FPR constraint \citep{wang2020learning,meisner2021combining} lies in the complexity of PPV. Constraining FPR at a pre-specified value to achieve a target PPV is generally intractable because PPV is a nonlinear function of TPR, FPR, and disease prevalence. \albert{Instead, our approach directly maximizes TPR while incorporating FPR and prevalence from the same dataset, eliminating the need for an explicit FPR constraint and aligning the optimization with the intended clinical utility: enforcing a PPV constraint that reflects downstream diagnostic burden while optimizing sensitivity. This results in a fundamentally different optimization problem from FPR-constrained approaches and provides a more relevant framework for evaluating and deploying biomarker-based screening tools.}

Additionally, the developed biomarker decision rules can be broadly applicable to various early detection initiatives. For instance, in ovarian cancer screening, clinical guidelines recommend that a biomarker-based screening rule achieve a minimum PPV of $10\%$ to justify invasive surgical diagnostic procedures \citep{jacobs2004progress}. Using our proposed method, an optimal biomarker-based decision rule for ovarian cancer can be developed while ensuring the PPV is maintained at the desired $10\%$ level. Similarly, \cite{mazzone2024clinical} investigated biomarker combination rule for the early detection of lung cancer, proposing a decision rule based on a penalized logistic regression model, which yielded a PPV of  $1.3\%$. They reported their rule reduced the number needed to screen with low-dose computed tomography (CT) to detect one lung cancer from 143 to 75. However, while their rule is straightforward, it may not be optimal for maximizing the screening benefit (i.e., TPR). Our proposed nonparametric approach provides a rigorous alternative to developing optimal decision rules in such settings, ensuring both improved screening efficiency and adherence to clinically relevant constraints. More broadly, this method offers a valuable tool for medical decision-making, accommodating patient heterogeneity while addressing practical implementation challenges in clinical practice.

\albert{In practice, PPV threshold should be  specified a priori based on clinical and operational considerations. Particularly, PPV directly determines the number needed to screen to detect one cancer case, which quantifies the
downstream diagnostic burden. A clinically meaningful PPV threshold should
reflect the maximum number of false-positive diagnostic evaluations that the
health system can support, given the cost, availability, and invasiveness of
confirmatory imaging or biopsy. For PDAC, where prevalence is extremely low (less than $1\%)$, PPV levels in the range of 1–5\% correspond to realistic   NNS values for CT/MRI follow-up, whereas higher PPV levels may be needed to justify invasive procedures.}

With our choice of non-convex surrogate function, one would have to take some care in choosing initial values for the optimization. Although we found using the  coefficient estimates from standard logistic regression as initial value to work generally well, one could also implement the optimization algorithm with different multiple initial values to ensure stability. Other choice of surrogate function that could be considered is the ramp loss \citep{huang2014identifying} which can be implemented using the difference of convex functions algorithm (DCA) \citep{tao1996numerical}. This could potentially provide a more reliable optimization approach as it decomposes the problem into convex sub-problems, which are generally easier to solve than non-convex ones. 

\albert{An interesting direction for future work is to incorporate ranking-based penalties when integrating external risk information. While our proposed method penalizes disagreement in binary decisions, one could instead encourage agreement in pairwise risk ranking in settings where only relative risk ordering is available. This may provide a principled alternative for incorporating external information.}

\section*{Acknowledgments}
The work was supported by a Catalyst Award from the Pancreatic Cancer Action Network. We express our appreciation to Lynn Matrisian for helpful comments.

\bibliographystyle{apalike}
\bibliography{reference}

 \clearpage

\begin{center}
  \Large Supplementary Material to ``\emph{Developing Performance-Guaranteed Biomarker Combination Rules with Integrated External Information under Practical Constraint}"
\end{center}

\section{Appendix A (Optimality of decision rule)} \label{optimality}

Let the objective and constraint functions be denoted as
\begin{align*}
f(d(\bm X)) 
&= \mathbb{E}_{\bm X}\!\left[\mathbbm{1}\{d(\bm X)=1\}\,\eta_1(\bm X)\right], \text {and } \\
h(d(\bm X)) 
&= \mathbb{E}_{\bm X}\!\left[\mathbbm{1}\{d(\bm X)=1\}
\left\{\gamma\eta_1(\bm X)-\alpha\gamma\eta_1(\bm X)-\alpha\eta_0(\bm X)\right\}\right], \text{ respectively}.
\end{align*}

For $\lambda \ge 0$, we define the Lagrangian
\begin{align*}
L(d(\bm X),\lambda)
= E\!\left[\mathbbm{1}\{d(\bm X)=1\}
\left\{\eta_1(\bm X)
-\lambda\alpha\gamma\eta_1(\bm X)
-\lambda\alpha\eta_0(\bm X)
+\lambda\gamma\eta_1(\bm X)\right\}\right],
\end{align*}
and its associated dual function
\begin{align*}
g(\lambda) = \sup_{d(\bm X)} L(d(\bm X),\lambda).
\end{align*}

For any fixed $\lambda$, the Lagrangian is maximized pointwise, yielding the decision rule
\begin{align*}
d_\lambda(\bm X) = \mathbbm{1}\{f_\lambda(\bm X) > 0\},
\end{align*}
where
\begin{align*}
f_\lambda(\bm X)
= \eta_1(\bm X)
-\lambda\alpha\gamma\eta_1(\bm X)
-\lambda\alpha\eta_0(\bm X)
+\lambda\gamma\eta_1(\bm X).
\end{align*}

Our goal is to establish strong duality, namely
\begin{align*}
p^*
= \sup_{d \in \mathscr{R}:\, h(d)\ge 0} f(d)
= \inf_{\lambda \ge 0} g(\lambda)
= g^*.
\end{align*}

We note that, weak duality holds immediately: for any decision rule $d$ satisfying $h(d)\ge 0$ and any $\lambda \ge 0$,
\begin{align*}
f(d) \le L(d,\lambda) \le g(\lambda),
\end{align*}
which implies $p^* \le g^*$.

To establish equality, we make the following assumptions:
\begin{enumerate}
\item There exists a strictly feasible decision rule $\tilde d$ such that $h(\tilde d) > 0$.
\item $\Pr\!\left(f_\lambda(\bm X)=0\right)=0$ for all $\lambda \ge 0$, ensuring that $d_\lambda$ is almost surely well-defined.
\end{enumerate}

When $\lambda=0$ (unconstrained problem), $f_{0}(\bm X)= \eta_{1}(\bm X)$. Since $\eta_{1}(\bm X)>0$, the PPV associated with the decision rule, $d_{0}(\bm X)$, is equal to the prevalence which we assume is not greater than the pre-specified, $\alpha$. Denote the decision rule with $\lambda=0$ by $d_0$, we have
\begin{align*}
h(d_0) < 0.
\end{align*}
Since the integrand defining $h(d_\lambda)$, $\mathbbm{1}\{d_\lambda (\bm X)=1\}
\left\{\gamma\eta_1(\bm X)-\alpha\gamma\eta_1(\bm X)-\alpha\eta_0(\bm X)\right\}$, is bounded and measurable, $h(d_\lambda)$ is continuous in $\lambda$ by the dominated convergence theorem.  Therefore, by the intermediate value theorem, there exists $\lambda^* > 0$ such that
\begin{align*}
h(d_{\lambda^*}) = 0.
\end{align*}

At this value, $d_{\lambda^*}$ is feasible for the primal problem, so
\begin{align*}
p^* \ge f(d_{\lambda^*}).
\end{align*}
Since $d_{\lambda^*}$ maximizes the Lagrangian at $\lambda^*$,
\begin{align*}
f(d_{\lambda^*}) = L(d_{\lambda^*},\lambda^*) = g(\lambda^*).
\end{align*}
Finally, because $g(\lambda^*) \ge g^* \ge p^*$, all inequalities hold with equality, implying
\begin{align*}
p^* = g^*.
\end{align*}

This establishes strong duality and justifies the use of the Lagrangian formulation in our setting.

\section{Appendix B (Optimization algorithm and justification)}

Consider the population version of the constraint optimization problem; 
\begin{align}
    \text{maximize }  &\E[D\Phi\left(\frac{\mathscr{X}_i^T\bm{\beta}}{h}\right)/p_{1}]  \nonumber \\
    \text{ subject to } & \E[D\Phi\left(\frac{\mathscr{X}_i^T\bm{\beta}}{h}\right)/p_{1}] -\alpha \E[\Phi\left(\frac{\mathscr{X}_i^T\bm{\beta}}{h}\right)] \ge 0 \label{optim1}
\end{align}

The Lagrangian corresponding to $\eqref{optim1}$ is 
\begin{align}
    L^{\Phi}(\bm \beta,\lambda) = \E[D\Phi\left(\frac{\mathscr{X}_i^T\bm{\beta}}{h}\right)/p_{1}]  - \lambda(\alpha \E[\Phi\left(\frac{\mathscr{X}_i^T\bm{\beta}}{h}\right)] - \E[D\Phi\left(\frac{\mathscr{X}_i^T\bm{\beta}}{h}\right)/p_{1}]  ) \label{lag2}
\end{align}
for Lagrange multiplier $\lambda \in [0,\infty]$. Denote $\kappa=\frac{\lambda}{1 +\lambda} \in [0,1]$, then for some fixed $\kappa$, our goal is solve for $\beta^{\Phi}_{\kappa}$ where 
\begin{align}
  \bm{\beta}_{\kappa}^{\Phi}= \underset{\beta}{\text{arg max }} \E[ \{D(1-\kappa)  - \kappa(\alpha p_{1}  - D)\}\Phi\left(\frac{\mathscr{X}_i^T\bm{\beta}}{h}\right)  ]  \label{optim2}
\end{align}
Let $\mathcal{P}( \kappa)$ and $\mathcal{B}( \kappa)$ denote the expected benefit and PPV constraint associated with the optimal decision rules from $\eqref{lag2}$
\begin{align*}
    \mathcal{B}(\kappa ) & = \E[D\Phi\left(\frac{\mathscr{X}_i^T\bm{\beta}_{\kappa}^{\Phi}}{h}\right)/p_{1}] \\
    \mathcal{P}(\kappa) & = -\alpha \E[\Phi\left(\frac{\mathscr{X}_i^T\bm{\beta}_{\kappa}^{\Phi}}{h}\right)] -\E[D\Phi\left(\frac{\mathscr{X}_i^T\bm{\beta}_{\kappa}^{\Phi}}{h}\right)/p_{1}]
\end{align*}

If it holds that  $\mathcal{P}(1) <0 \le \mathcal{P}(0)$, we can obtain some $\kappa^{*}$ ( may not be unique) such that  $\mathcal{P}(\kappa^{*})=0$. The restriction  $\mathcal{P}(\kappa) <0$ ensures that there is a feasible solution to the constraint and $ \mathcal{P}(\kappa) \ge 0$ is to restrict the case when the PPV under the unconstrained problem is larger than $\alpha$.

The above  indicates that to solve $\eqref{optim1}$ , we only need to solve the unconstrained problem in \eqref{optim2} for any $\kappa$ and then find $\kappa^{*}$ such that the PPV associated with $\kappa^{*}$ satisfies the constraint.

\begin{algorithm*}[t] 
\caption{Estimating optimal linear decision rule}
\begin{algorithmic}[1]
  \State Given training data $(D_{i}, \bm X_{1i})$ for $i=1,\ldots,n$ and pre-specified $\alpha$.
   \State Start with an initial value of $\bm \beta$,  i.e. $ \bm \hat{\bm \beta}_{0}$, which one can derive from the standard logistic regression approach. 
    \State Obtain  
    \begin{align*}
  \hat{\bm{\beta}}_{\kappa}^{\Phi}= \underset{\beta}{\text{arg max }} \sum_{i=1}^{n}[ \{D(1-\kappa)  - \kappa(\alpha p_{1}  - D) \}\Phi\left(\frac{\mathscr{X}_i^T\bm{\beta}}{h}\right)  ]  \label{optim2}
\end{align*} using any non-linear optimization (e.g., \texttt{nlm} in R) of your choice.
    \State Perform grid search on $[0,1]$ for $\hat{\kappa}$ that solves 
    \begin{align*}
        \widehat{\text{PPV}}(\mathbbm{1}\{\mathscr{X}^T\hat{\bm{\beta}}_{\kappa}^{\Phi}>0\} )-\alpha=0.
    \end{align*}
    
    \State Project   $\hat{\bm{\beta}}_{\hat{\kappa}}^{\Phi}$ to satisfy the  $\|\hat{\bm{\beta}}_{\hat{\kappa}}^{\Phi} \|^{2}_{2}=1$.
    \State Final linear rule is of the form $ \hat d_L(\bm X,\hat{\bm \beta}^{\Phi}_{\kappa})= \mathbbm{1}\{\mathscr{X}^T\hat{\bm{\beta}}_{\kappa}^{\Phi}>0\}$.
\end{algorithmic}
\label{Algo1}
\end{algorithm*}

\section{Appendix C (Additional numerical study and data application)}

\subsection{Nested case-control sampling} \label{case-control}

Here we consider the scenario where data is obtained from a nested case-control sampling design, commonly considered in biomarker studies \citep{ernster1994nested}. We first simulate a large cohort  of size $N=10^6$, with  covariates $\textbf{X}=(X_1,X_2)$ simulated from the standard normal distribution, and the outcomes $D$ are generated from the conditional distribution $\text{pr}(D=1|X_1,X_2)=1/(1+ \text{exp}(\beta_0 +X_1\beta_1 + X_2\beta_2))$. A similar simulation setting is illustrated in \cite{rose2008note}. We set $\beta_0=-8$, $\beta_1=2.1$, and $\beta_2=2.1$ such that the disease prevalence is approximately $1\%$. We further added 6\% contaminated control ($D=0, X_1=6$  and $X_2=6$) observations to depict the  setting where the data contains some controls that are actually cases.  We then obtain the case-control data by randomly sampling $n_{1}$ cases and $n_{0}=20n_{1}$ controls resulting in a sample size of $n=n_{1}+n_{0}$. We implement our methods on training samples of size $n_{train}= 2100$ and $4200$. Performance was evaluated on an independent test sample of size $n_{test}= 21,000$. The results are presented in Table~\ref{tab:casecontrol}.

\begin{table}[!t]
\caption{Nested case-control study sampling scenario with $6\%$ contaminated control observations. Average TPR and PPV with corresponding standard error (in parentheses) in the test sample across 500 simulation replicates under PPV $(\alpha=0.04)$ and prevalence $(p_1=0.01)$.}
\label{tab:casecontrol}
\centering
\scalebox{0.8}
{\begin{tabular}{cccccc}
\toprule
\multicolumn{1}{c}{$n$} & \multicolumn{1}{c}{\textbf{Measure}}& \multicolumn{4}{c}{\textbf{Methods}} \\
\cmidrule(rl){3-6} 
\multicolumn{2}{c}{} &{Standard} & {Plug-in Logistic} & {Plug-in SL}  & {DOOLR}  \\
\midrule
2100 & TPR & 0.710(0.184) & 0.591(0.273) &  0.956(0.009) & 0.932(0.045) \\
  & PPV &  0.053(0.010) & 0.037(0.007) & 0.040(0.003) & 0.042(0.004)\\\\
4200 & TPR & 0.770(0.194) & 0.654(0.301) & 0.957(0.007) & 0.945(0.019)  \\
   & PPV & 0.053(0.008) & 0.040(0.006) & 0.039(0.003) & 0.042(0.003) \\
\bottomrule
\end{tabular}}
\end{table}

\subsection{Results under different PPV constraint}
We present simulation and application results for scenarios in the manuscript for different PPV constraints.
\begin{table}[h]
\caption{Application to identifying NOD patients at high risk of PDAC. Estimate of prevalence of pancreatic cancer in the PANDORA study is $0.44\%$. We pre-specify PPV constraint values at $1\%$, $1.5\%$, and $2\%$ and report average TPR, PPV with corresponding standard error (in parentheses). When the PPV constraint is low at $1\%$, the Standard approach performed better than our proposed linear rules. However, as the PPV constraints increase, we observe reduction in performance of the Standard approach.}
\centering
\label{tab:application}
\scalebox{0.85}
{\begin{tabular}{ccccccccc}
\toprule
$\alpha$ &  & Measure & Standard & Plug-in Logistic & Plug-in SL & DOOLR & DOOLR-IT \\
\midrule
0.010 & & TPR & 0.964(0.048) & 0.969(0.043) & 0.961(0.055) & 0.906(0.068) & 0.912(0.069) \\\\
      & & PPV & 0.010(0.001) & 0.010(0.001) & 0.010(0.001) & 0.010(0.002) & 0.010(0.001) \\\\
\midrule
0.015 & & TPR & 0.807(0.110) & 0.817(0.097) & 0.848(0.092) & 0.830(0.111) & 0.839(0.099) \\\\
      & & PPV & 0.016(0.006) & 0.015(0.004) & 0.015(0.003) & 0.016(0.004) & 0.015(0.003) \\\\
\midrule
0.020 & & TPR & 0.659(0.124) & 0.671(0.127) & 0.745(0.110) & 0.755(0.106) & 0.775(0.105) \\\\
      & & PPV & 0.024(0.012) & 0.024(0.015) & 0.020(0.006) & 0.020(0.006) & 0.020(0.004) \\\\
\bottomrule
\end{tabular}}
\end{table}
\begin{table}[h]
\caption{Data generation is under the linear rule with contamination scenario ( $\beta_{0}=-12.5$, $\beta_1=\beta_2=7, p_{1}=0.1, \text{and } \alpha=0.3$): mean and standard deviation (in parentheses) of TPR and PPV for the DOOLR rule across smoothing parameters $h$. Here, Adaptive refers to 
$h = n^{-1/3}\,\mathrm{StdDev}\!(\mathcal{X}^{T}\hat{\bm \beta}^{0}/\lVert 
\hat{\bm \beta}^{0}\rVert)$.  For this numerical experiment, we observe that $h=5$ and $h=0.1$ gave the best empirical performance across the selected $h$ values. The Adaptive choice of $h$ performed comparably to the other choices of $h$ and not far from the best performing $h$ values.}
\label{tab:doolr_h}
\scalebox{0.8}{
\begin{tabular}{ccccccccc}
\toprule
$n$ & Measure 
& $h = 0.003$ 
& $h = 0.02$ 
& $h = 0.10$
& $h = 0.50$
& $h = 1.00$
& $h = 5.00$
& Adaptive \\
\midrule
\multirow{2}{*}{200} 
  & TPR & 0.884(0.197) & 0.911(0.178) & 0.906(0.131) & 0.843(0.236) & 0.844(0.278) & 0.905(0.203) & 0.875(0.154) \\\\
  & PPV & 0.319(0.105) & 0.306(0.120) & 0.376(0.079) & 0.440(0.134) & 0.336(0.087) & 0.339(0.093) & 0.435(0.105) \\\\
\midrule
\multirow{2}{*}{400} 
  & TPR & 0.887(0.204) & 0.888(0.205) & 0.927(0.118) & 0.863(0.254) & 0.933(0.205) & 0.959(0.100) & 0.894(0.177) \\\\
  & PPV & 0.347(0.091) & 0.348(0.080) & 0.386(0.071) & 0.431(0.117) & 0.354(0.058) & 0.356(0.060) & 0.430(0.100) \\\\
\midrule
\multirow{2}{*}{800} 
  & TPR & 0.900(0.199) & 0.906(0.193) & 0.959(0.076) & 0.858(0.306) & 0.911(0.264) & 0.964(0.120) & 0.923(0.188) \\\\
  & PPV & 0.368(0.064) & 0.372(0.068) & 0.430(0.067) & 0.402(0.108) & 0.345(0.074) & 0.363(0.058) & 0.425(0.087) \\\\
\bottomrule
\end{tabular}}
\end{table}

\begin{table}[h]
\caption{Data generation is under the linear decision rule with no contamination: mean and standard deviation (in parentheses) of estimates of TPR and PPV for varying values of the PPV constraint $(\alpha)$.}
\label{tab:linear}
\centering
\scalebox{0.8}
{\begin{tabular}{cccccccccc}
\toprule
\multicolumn{1}{c}{$\alpha$} & \multicolumn{1}{c}{$n$} & \multicolumn{1}{c}{ } & \multicolumn{1}{c}{\textbf{Measure}}& \multicolumn{4}{c}{\textbf{Methods}} \\
\cmidrule(rl){5-8} 
\multicolumn{4}{c}{} &{Standard} & {Plug-in Logistic} & {Plug-in SL}  & {DOOLR} \\
\midrule
0.030 & 1250 &  & TPR & 0.993(0.004) & 0.993(0.007) & 0.992(0.010) & 0.984(0.030) \\\\
      &      &  & PPV & 0.033(0.003) & 0.030(0.002) & 0.030(0.012) & 0.032(0.004) \\\\
      & 2500 &  & TPR & 0.994(0.002) & 0.995(0.003) & 0.995(0.004) & 0.990(0.011) \\\\
      &      &  & PPV & 0.032(0.002) & 0.030(0.001) & 0.029(0.007) & 0.032(0.005) \\\\
\midrule
0.040 & 1250 &  & TPR & 0.988(0.005) & 0.988(0.009) & 0.988(0.011) & 0.975(0.036) \\\\
      &      &  & PPV & 0.042(0.003) & 0.040(0.002) & 0.039(0.011) & 0.044(0.009) \\\\
      & 2500 &  & TPR & 0.991(0.003) & 0.991(0.005) & 0.991(0.006) & 0.983(0.017) \\\\
      &      &  & PPV & 0.042(0.002) & 0.040(0.002) & 0.040(0.005) & 0.043(0.006) \\\\
\midrule
0.045 & 1250 &  & TPR & 0.984(0.007) & 0.984(0.010) & 0.984(0.013) & 0.963(0.049) \\\\
      &      &  & PPV & 0.047(0.003) & 0.045(0.003) & 0.043(0.011) & 0.050(0.013) \\\\
      & 2500 &  & TPR & 0.986(0.004) & 0.986(0.007) & 0.986(0.007) & 0.976(0.019) \\\\
      &      &  & PPV & 0.046(0.002) & 0.045(0.002) & 0.045(0.005) & 0.049(0.007) \\\\
\bottomrule
\end{tabular}}
\end{table}

\begin{table}[t]
\caption{Data generation is under the linear decision rule with contamination: mean and standard deviation (in parentheses) of estimates of TPR and PPV for varying values of the PPV constraint $(\alpha)$.}
\label{tab:linearCon}
\centering
\scalebox{0.8}
{\begin{tabular}{cccccccccc}
\toprule
\multicolumn{1}{c}{$\alpha$} & \multicolumn{1}{c}{$n$} & \multicolumn{1}{c}{ } & \multicolumn{1}{c}{\textbf{Measure}}& \multicolumn{4}{c}{\textbf{Methods}} \\
\cmidrule(rl){5-8} 
\multicolumn{4}{c}{} &{Standard} & {Plug-in Logistic} & {Plug-in SL}  & {DOOLR} \\
\midrule
0.030 & 1250 &  & TPR & 0.625(0.308) & 0.690(0.288) & 0.986(0.019) & 0.968(0.068) \\\\
      &      &  & PPV & 0.038(0.019) & 0.028(0.007) & 0.030(0.002) & 0.033(0.006) \\\\
      & 2500 &  & TPR & 0.677(0.287) & 0.738(0.288) & 0.989(0.008) & 0.979(0.033) \\\\
      &      &  & PPV & 0.043(0.021) & 0.029(0.005) & 0.030(0.001) & 0.032(0.005) \\\\
\midrule
0.040 & 1250 &  & TPR & 0.519(0.295) & 0.608(0.306) & 0.973(0.022) & 0.939(0.091) \\\\
      &      &  & PPV & 0.044(0.025) & 0.033(0.012) & 0.040(0.003) & 0.044(0.009) \\\\
      & 2500 &  & TPR & 0.606(0.271) & 0.664(0.291) & 0.977(0.012) & 0.962(0.041) \\\\
      &      &  & PPV & 0.051(0.022) & 0.037(0.008) & 0.040(0.002) & 0.043(0.006) \\\\
\midrule
0.045 & 1250 &  & TPR & 0.510(0.293) & 0.567(0.298) & 0.965(0.024) & 0.926(0.084) \\\\
      &      &  & PPV & 0.044(0.024) & 0.035(0.014) & 0.045(0.004) & 0.048(0.010) \\\\
      & 2500 &  & TPR & 0.591(0.269) & 0.626(0.289) & 0.970(0.012) & 0.948(0.048) \\\\
      &      &  & PPV & 0.052(0.022) & 0.040(0.011) & 0.045(0.003) & 0.049(0.007) \\\\
\bottomrule
\end{tabular}}
\end{table}

\begin{table}[t]
\caption{Data generation is under the non-linear decision rule: mean and standard deviation (in parentheses) of estimates of TPR and PPV for varying values of the PPV constraint $(\alpha)$.}
\label{tab:nonlinear}
\centering
\scalebox{0.8}
{\begin{tabular}{cccccccccc}
\toprule
\multicolumn{1}{c}{$\alpha$} & \multicolumn{1}{c}{$n$} & \multicolumn{1}{c}{ } & \multicolumn{1}{c}{\textbf{Measure}}& \multicolumn{4}{c}{\textbf{Methods}} \\
\cmidrule(rl){5-8} 
\multicolumn{4}{c}{} &{Standard} & {Plug-in Logistic} & {Plug-in SL}  & {DOOLR} \\
\midrule
0.030 & 1250 &  & TPR & 0.740(0.282) & 0.930(0.145) & 0.993(0.023) & 0.932(0.120) \\\\
      &      &  & PPV & 0.041(0.014) & 0.031(0.004) & 0.030(0.003) & 0.031(0.006) \\\\
      & 2500 &  & TPR & 0.641(0.338) & 0.959(0.102) & 0.996(0.004) & 0.952(0.081) \\\\
      &      &  & PPV & 0.041(0.013) & 0.030(0.002) & 0.030(0.001) & 0.031(0.003) \\\\
\midrule
0.040 & 1250 &  & TPR & 0.682(0.261) & 0.796(0.236) & 0.983(0.027) & 0.862(0.153) \\\\
      &      &  & PPV & 0.047(0.009) & 0.041(0.007) & 0.040(0.003) & 0.038(0.011) \\\\
      & 2500 &  & TPR & 0.613(0.301) & 0.839(0.201) & 0.988(0.010) & 0.879(0.124) \\\\
      &      &  & PPV & 0.047(0.009) & 0.041(0.004) & 0.040(0.002) & 0.039(0.009) \\\\
\midrule
0.045 & 1250 &  & TPR & 0.634(0.250) & 0.735(0.236) & 0.965(0.067) & 0.856(0.161) \\\\
      &      &  & PPV & 0.049(0.009) & 0.045(0.009) & 0.046(0.005) & 0.038(0.016) \\\\
      & 2500 &  & TPR & 0.584(0.259) & 0.755(0.224) & 0.981(0.017) & 0.846(0.141) \\\\
      &      &  & PPV & 0.050(0.008) & 0.045(0.005) & 0.045(0.002) & 0.041(0.013) \\\\
\bottomrule
\end{tabular}}
\end{table}

\begin{table}[t]
\caption{Data generation is under the piece-wise linear decision rule: mean and standard deviation (in parentheses) of estimates of TPR and PPV for varying values of the PPV constraint $(\alpha)$.}
\label{tab:piece-wise}
\centering
\scalebox{0.8}
{\begin{tabular}{cccccccccc}
\toprule
\multicolumn{1}{c}{$\alpha$} & \multicolumn{1}{c}{$n$} & \multicolumn{1}{c}{ } & \multicolumn{1}{c}{\textbf{Measure}}& \multicolumn{4}{c}{\textbf{Methods}} \\
\cmidrule(rl){5-8} 
\multicolumn{4}{c}{} &{Standard} & {Plug-in Logistic} & {Plug-in SL}  & {DOOLR} \\
\midrule
0.030 & 1250 &  & TPR & 0.457(0.115) & 0.514(0.094) & 0.919(0.117) & 0.499(0.101) \\\\
      &      &  & PPV & 0.035(0.012) & 0.031(0.007) & 0.031(0.012) & 0.044(0.032) \\\\
      & 2500 &  & TPR & 0.508(0.104) & 0.561(0.075) & 0.961(0.057) & 0.545(0.088) \\\\
      &      &  & PPV & 0.036(0.012) & 0.031(0.005) & 0.030(0.003) & 0.044(0.030) \\\\
\midrule
0.040 & 1250 &  & TPR & 0.373(0.131) & 0.433(0.128) & 0.909(0.130) & 0.458(0.106) \\\\
      &      &  & PPV & 0.044(0.016) & 0.041(0.012) & 0.044(0.018) & 0.069(0.049) \\\\
      & 2500 &  & TPR & 0.423(0.106) & 0.475(0.100) & 0.946(0.061) & 0.512(0.088) \\\\
      &      &  & PPV & 0.048(0.014) & 0.042(0.009) & 0.041(0.006) & 0.062(0.035) \\\\
\midrule
0.045 & 1250 &  & TPR & 0.368(0.134) & 0.423(0.128) & 0.869(0.128) & 0.458(0.120) \\\\
      &      &  & PPV & 0.050(0.015) & 0.046(0.014) & 0.049(0.019) & 0.077(0.048) \\\\
      & 2500 &  & TPR & 0.355(0.123) & 0.415(0.120) & 0.922(0.071) & 0.503(0.093) \\\\
      &      &  & PPV & 0.051(0.015) & 0.046(0.009) & 0.045(0.005) & 0.069(0.038) \\\\
\bottomrule
\end{tabular}}
\end{table}

\clearpage

 \section{Appendix D}
\subsection{Proof of theorems}
\noindent The proof of the Theorem 1 is based on Lemma 2 from  \cite{meisner2021combining} and M-estimation theory \citep{van2000asymptotic}. Lemma 2 is stated and proved in \cite{meisner2021combining} hence we are not going to prove but just state it here.

We first define the following notations, $\widehat{\text{FPR}}^{\Phi}(\beta_0,\bm \beta_1)\coloneqq n_0^{-1}\sum_{j=1}^{n_{0}}\Phi(\frac{\bm{\beta_0 + \bm X_{0j}\beta}_1}{h})$, $\widehat{\text{TPR}}^{\Phi}(\beta_0,\bm \beta_1)\coloneqq n_1^{-1}\sum_{i=1}^{n_{1}}\Phi(\frac{\bm{\beta_0 + \bm X_{1i}\beta}_1}{h})$, $\text{FPR}^{\Phi}(\beta_0,\bm \beta_1)\coloneqq \text{Pr}\left(\Phi(\frac{\bm{\beta_0 + \bm X\beta}_1}{h})|D=0 \right)$, and  $\text{TPR}^{\Phi}(\beta_0,\bm \beta_1)\coloneqq \text{Pr}\left(\Phi(\frac{\bm{\beta_0 + \bm X\beta}_1}{h})|D=1 \right)$.

\begin{lemma}[Lemma 2 from \cite{meisner2021combining}]  \label{lemma1}
    Under conditions (1)–(5), we have that
\begin{align*}
   \underset{(\beta_0,\bm \beta_1) \in \Omega}{\text{sup }} |\widehat{\text{FPR}}^{\Phi}(\beta_0,\bm \beta_1) - \text{FPR}(\beta_0,\bm \beta_1) |\rightarrow 0,  \quad  \underset{(\beta_0,\bm \beta_1) \in \Omega}{\text{sup }} |\widehat{\text{TPR}}^{\Phi} (\beta_0,\bm{\beta_1}) - \text{TPR} (\beta_0,\bm{\beta_1}) |\rightarrow 0
\end{align*}

    almost surely as $n\rightarrow \infty$, where $\Omega=\{(\beta_0,\bm \beta_1)\in  \mathbb{R} \times \mathbb{R}^{p}:||\bm \beta_1||=1\}$.
\end{lemma}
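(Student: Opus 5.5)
We split the error into a stochastic part and a smoothing-bias part. For $d\in\{0,1\}$ write $W_{\bm\beta}=\beta_0+\bm X^T\bm\beta_1$, with $\bm X$ drawn from $F_d$, and let $\text{Pr}_d$ and $\mathbb{E}_d$ denote probability and expectation under $F_d$. We read $\text{TPR}^{\Phi}$ and $\text{FPR}^{\Phi}$ as the conditional expectations $\mathbb{E}_d[\Phi(W_{\bm\beta}/h)]$ for $d=1$ and $d=0$. The lemma is understood with $h=h_n\to 0$. For the TPR we use the triangle inequality
\begin{align*}
\sup_{\Omega}\bigl|\widehat{\text{TPR}}^{\Phi}-\text{TPR}\bigr|
\le \sup_{\Omega}\bigl|\widehat{\text{TPR}}^{\Phi}-\text{TPR}^{\Phi}\bigr|
+\sup_{\Omega}\bigl|\text{TPR}^{\Phi}-\text{TPR}\bigr| ,
\end{align*}
and the FPR is handled identically with $d=0$. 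We show that each term vanishes.

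\textbf{Stochastic term.} We use a uniform law of large numbers that does not depend on $h$. The key observation is that $\Phi\{(\beta_0+\bm x^T\bm\beta_1)/h\}=\Phi(b_0+\bm x^T\bm b_1)$ with $(b_0,\bm b_1)=(\beta_0,\bm\beta_1)/h\in\mathbb{R}^{p+1}$. Hence every function appearing, for every $h>0$ and every $(\beta_0,\bm\beta_1)\in\Omega$, lies in the single class
\begin{align*}
\mathcal{F}=\{\bm x\mapsto \Phi(b_0+\bm x^T\bm b_1): (b_0,\bm b_1)\in\mathbb{R}^{p+1}\}.
\end{align*}
This class is a monotone transformation of a finite-dimensional vector space of functions, so it is VC-subgraph. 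It is also uniformly bounded by $1$, so it is $F_d$-Glivenko--Cantelli. By (A1)--(A2), $n_d\to\infty$ and the observations are i.i.d., which gives $\sup_{f\in\mathcal{F}}|n_d^{-1}\sum_i f(\bm X_{di})-\mathbb{E}_d f|\to 0$ almost surely. Because the supremum is over all of $\mathcal{F}$, this holds uniformly in the bandwidth as well, so letting $h=h_n$ change with $n$ causes no difficulty.

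\textbf{Bias term.} This is the step that needs the smoothness assumptions; the issue is uniformity in $\bm\beta$, including unbounded $\beta_0$. Fix $\varepsilon>0$. We have
\begin{align*}
\bigl|\text{TPR}^{\Phi}(\bm\beta)-\text{TPR}(\bm\beta)\bigr|
\le \mathbb{E}_1\bigl|\Phi(W_{\bm\beta}/h)-\mathbbm{1}\{W_{\bm\beta}>0\}\bigr| .
\end{align*}
We split the expectation on the event $\{|W_{\bm\beta}|\le\varepsilon\}$ and its complement.

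On $\{|W_{\bm\beta}|>\varepsilon\}$ the integrand is at most $\Phi(-\varepsilon/h)$.

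On $\{|W_{\bm\beta}|\le\varepsilon\}$ the integrand is at most $1$. The probability of this event is $\text{Pr}_1(-\beta_0-\varepsilon\le \bm X^T\bm\beta_1\le -\beta_0+\varepsilon)$. By (A4), with $\|\bm\beta_1\|=1$, this is at most $2L\varepsilon$, where $L$ is the uniform Lipschitz constant of the distribution function of $\bm X^T\bm\beta_1$. The bound does not depend on $\beta_0$ or $\bm\beta_1$. The same Lipschitz property also shows that the event $\{W_{\bm\beta}=0\}$ has probability zero, so the value of the indicator on ties is irrelevant.

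Combining the two pieces gives $\sup_{\Omega}|\text{TPR}^{\Phi}-\text{TPR}|\le 2L\varepsilon+\Phi(-\varepsilon/h)$. Choosing $\varepsilon=\sqrt{h}$ makes both terms tend to $0$ as $h\to0$. This term is deterministic, so together with the almost-sure convergence of the stochastic term the conclusion holds almost surely. Running the same argument with $F_0$ gives the FPR statement.
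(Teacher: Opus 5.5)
Your argument is correct, and it is self-contained where the paper is not. The paper gives no proof of this lemma: it states it and defers entirely to \cite{meisner2021combining}.

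The closest in-paper reasoning is the proof of Theorem~1(a). It uses the same two ingredients you do: a Glivenko--Cantelli step for the gap between the empirical and the smoothed population quantities, and an ``$h\to 0$'' step for the smoothing bias. However, it asserts each piece without addressing uniformity in $\bm\beta$ or in the bandwidth.

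Your proposal supplies exactly that uniformity:
\begin{itemize}
\item The rescaling $(b_0,\bm b_1)=\bm\beta/h$ puts every smoothed function, for every bandwidth, into one bounded VC-subgraph class. The stochastic term therefore vanishes uniformly in $h$. This also covers the data-dependent bandwidth $h=n^{-1/3}\mathrm{StdDev}(\cdot)$ that the paper actually uses, provided it tends to zero almost surely.
\item The slab bound $\text{Pr}_1(|W_{\bm\beta}|\le\varepsilon)\le 2L\varepsilon$ from (A4) makes the bias bound uniform over $\Omega$, even though $\beta_0$ is unbounded.
\end{itemize}

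You also correctly make explicit two points the statement leaves implicit. First, the lemma requires $h=h_n\to 0$, since for fixed $h$ the bias term does not vanish. Second, $\text{TPR}^{\Phi}$ and $\text{FPR}^{\Phi}$ must be read as conditional expectations, not probabilities.

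The one further reading you rely on concerns (A1). Taken literally ($n_d/n\in(0,1)$), it does not force $n_d\to\infty$. You are implicitly using the intended version, in which the group proportions stay bounded away from zero.

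Conditions (A3), (A5), and the quantile-function half of (A4) are not needed for this lemma, which is fine.
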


\subsection*{Proof of theorem 1(a)}
We want to show that $TPR(\hat{\bm \beta}^{\Phi}) \rightarrow TPR(\bm \beta^*)$ in probability.
\begin{proof}
   Define $\bm \beta\coloneqq (\beta_{0},\bm \beta_{1})$, $\widehat{\text{TPR}}(\bm \beta)\coloneqq n_1^{-1}\sum_{i=1}^{n_{1}}\mathbbm{1} \{\bm X_{1i} \bm \beta > 0\}$, and $\widehat{\text{FPR}}(\bm \beta)\coloneqq n_0^{-1}\sum_{j=1}^{n_{0}}\mathbbm{1} \{\bm X_{0j} \bm \beta > 0\}$, then
   \begin{align*}
       |\text{TPR}(\hat{\bm \beta}^{\Phi}) -\text{TPR}(\bm \beta^*)| & \le   |\text{TPR}(\hat{\bm \beta}^{\Phi}) - \widehat{\text{TPR}}(\hat{\bm \beta}^{\Phi}) + \widehat{\text{TPR}}(\hat{\bm \beta}^{\Phi}) -\text{TPR}(\bm \beta^*)| \\
       &  \le   \underset{\bm \beta \in \Omega}{\text{ sup }}|\text{TPR}(\bm \beta) - \widehat{\text{TPR}}(\bm \beta)| + |\widehat{\text{TPR}}(\hat{\bm \beta}^{\Phi})- \text{TPR}^{\Phi}(\hat{\bm \beta}^{\Phi})|  \\ 
       & + |\text{TPR}^{\Phi}(\hat{\bm \beta}^{\Phi}) -\text{TPR}(\hat{\bm \beta}^{\Phi})| + |\text{TPR}(\hat{\bm \beta}^{\Phi}) -\text{TPR}(\bm \beta^*)|
   \end{align*}
 \noindent $\underset{\bm \beta \in \Omega}{\text{ sup }}|\text{TPR}(\bm \beta) - \widehat{\text{TPR}}(\bm \beta)| $ converges to 0 by Lemma~\ref{lemma1}, $ |\text{TPR}^{\Phi}(\hat{\bm \beta}^{\Phi}) -\text{TPR}(\hat{\bm \beta}^{\Phi})|$ converges to 0 as $h \rightarrow 0$, $|\widehat{\text{TPR}}(\hat{\bm \beta}^{\Phi})- \text{TPR}^{\Phi}(\hat{\bm \beta}^{\Phi})| $ converges to 0 by Glivenko-Cantelli theorem, and $|\text{TPR}(\hat{\bm \beta}^{\Phi}) -\text{TPR}(\bm \beta^*) | $ converges to 0 by the continuous mapping theorem (we assumed $\text{TPR}(\bm \beta)$ is Lipschitz continuous).
\end{proof}

\subsection*{Proof of theorem 1(b)}
Here we want to show that $\text{lim sup}_n PPV(\hat{\bm \beta}^{\Phi})\ge \alpha$ in probability

\begin{proof}
    First, we define the continuous function $f(x,y)\coloneqq \frac{p_1x}{p_1x +(1-p_1)y}$ and express $PPV(\hat{\bm \beta} ^{\Phi})\coloneqq \frac{p_1 TPF(\hat{\bm \beta}^{\Phi})}{p_1 TPF(\hat{\bm \beta}^{\Phi}) +(1-p_1)FPF(\hat{\bm \beta}^{\Phi})}$ where 
    $\hat{\bm \beta}^{\Phi}=(\hat{\beta}_0^{\Phi},\hat{\bm \beta}_1^{\Phi})$, $\bm \beta=(\beta_0,\bm \beta_1)$,  and $p_1$ is prevalence which we assumed to be fixed. Then we have that
    \begin{align*}
        PPV(\hat{\bm \beta}^{\Phi})&=|\widehat{PPV}^{\Phi}(\hat{\bm \beta}^{\Phi}) + \{PPV(\hat{\bm \beta}^{\Phi})-\widehat{PPV}^{\Phi}(\hat{\bm \beta}^{\Phi}) \}| \\
        &\ge |\widehat{PPV}^{\Phi}(\hat{\bm \beta}^{\Phi})| - |PPV(\hat{\bm \beta}^{\Phi})-\widehat{PPV}^{\Phi}(\hat{\bm \beta}^{\Phi})| \\
        & \ge \alpha - \underset{\bm \beta \in \Omega}{\text{sup }}|PPV(\bm \beta)-\widehat{PPV}^{\Phi}(\bm \beta)| \\
    \end{align*}
    From Lemma~\ref{lemma1}, we have that  $\underset{\bm \beta \in \Omega}{\text{sup }}|PPV(\bm \beta)-\widehat{PPV}^{\Phi}(\bm \beta)|$ converges to zero almost surely hence
    \begin{align*}
         \text{Pr}\{\underset{n}{\text{lim inf }} \text{PPV}(\hat{\bm \beta}) \ge \alpha \}  \ge  \text{Pr}\{\underset{n}{\text{lim inf  }} \underset{\bm \beta \in \Omega}{\text{sup }}|PPV(\bm \beta)-\widehat{PPV}^{\Phi}(\bm \beta)|=0 \}=1.
    \end{align*}
\end{proof}

\subsection*{Proof of theorem 1(c)} We want to show that $\hat{\bm \beta}^{\Phi} \rightarrow \bm \beta^*$ in probability. 
\begin{proof}
    
To show the above, we will employ Theorem 5.7 from \cite{van2000asymptotic}. We first assume $\bm \beta^*$ is a well-separated point of maximum of $TPR(\bm \beta)$ and show that the following conditions:  
\begin{enumerate}
    \item $\underset{\bm \beta \in \Omega}{\text{sup }} | \widehat{TPR}^{\Phi}(\bm \beta) - TPR(\bm \beta)| \overset{p}{\rightarrow}0$ \quad (\text{convergence in probability})
    \item $\widehat{TPR}^{\Phi}(\hat{\bm \beta}^{\Phi}) \ge \widehat{TPR}^{\Phi}(\bm \beta^*)-o_P(1)$ hold.
\end{enumerate}

From Lemma~\ref{lemma1}, we have $\underset{\bm \beta \in \Omega}{\text{sup }} | \widehat{TPR}^{\Phi}(\bm \beta) - TPR(\bm \beta)| \overset{p}{\rightarrow}0$ hence $1)$  hold. To show 2), we  note that $\hat{\bm \beta}^{\Phi}$ is a near maximizer of $\widehat{TPR}^{\Phi}(\bm \beta)$, that is $ \widehat{TPR}^{\Phi}(\hat{\bm \beta}^{\Phi}) \ge \underset{\bm \beta \in \Omega}{\text{sup }} \widehat{TPR}^{\Phi}(\bm \beta) -o_{P}(1).$ We then have that
\begin{align*}
    \widehat{TPR}^{\Phi}(\hat{\bm \beta}^{\Phi}) & \ge \underset{\bm \beta \in \Omega}{\text{sup }} \widehat{TPR}^{\Phi}(\bm \beta) -o_{P}(1) \\
    \widehat{TPR}^{\Phi}(\hat{\bm \beta}^{\Phi}) & \ge \underset{\bm \beta \in \Omega}{\text{sup }} \widehat{TPR}^{\Phi}(\bm \beta) + \widehat{TPR}^{\Phi}(\bm \beta^*)-   \widehat{TPR}^{\Phi}(\bm \beta^*) -o_{P}(1) \\
    \widehat{TPR}^{\Phi}(\hat{\bm \beta}^{\Phi}) &  \ge \widehat{TPR}^{\Phi}(\bm \beta^*) + \underset{\bm \beta \in \Omega}{\text{sup }} \widehat{TPR}^{\Phi}(\bm \beta) - TPR(\bm\beta^*) + TPR(\bm\beta^*) -   \widehat{TPR}^{\Phi}(\bm \beta^*) - o_{P}(1) \\
     \widehat{TPR}^{\Phi}(\hat{\bm \beta}^{\Phi}) &  \ge \widehat{TPR}^{\Phi}(\bm \beta^*) + \underset{\bm \beta \in \Omega}{\text{sup }} \widehat{TPR}^{\Phi}(\bm \beta) - \underset{\bm \beta \in \Omega }{\text {sup }}TPR(\bm\beta) + TPR(\bm\beta^*) -   \widehat{TPR}^{\Phi}(\bm \beta^*) - o_{P}(1)
\end{align*}
We therefore have that $\underset{\bm \beta \in \Omega}{\text{sup }} \widehat{TPR}^{\Phi}(\bm \beta) - \underset{\bm \beta \in \Omega }{\text {sup }}TPR(\bm\beta)\le  \underset{\bm \beta \in \Omega}{\text{sup }}| \widehat{TPR}^{\Phi}(\bm \beta) -TPR(\bm\beta) | \rightarrow 0$ and $TPR(\bm\beta^*) -   \widehat{TPR}^{\Phi}(\bm \beta^*) \le \underset{\bm \beta \in \Omega}{\text{sup }}| TPR(\bm\beta) -   \widehat{TPR}^{\Phi}(\bm \beta) | \rightarrow 0$ almost surely from Lemma~\ref{lemma1} hence we have $\widehat{TPR}^{\Phi}(\hat{\bm \beta}^{\Phi}) \ge \widehat{TPR}^{\Phi}(\bm \beta^*)  - o_{P}(1)$. Therefore we can conclude that $\hat{\bm \beta}^{\Phi} \overset{p}{\rightarrow}\bm \beta.$
\end{proof}

\end{document}